\documentclass[12pt]{article}
\usepackage{amssymb}
\usepackage{times} 
\usepackage{natbib}
\usepackage{graphicx}
\usepackage{setspace}
\newcommand{\blind}{0}
\usepackage[pass,paperwidth=8.5in,paperheight=11in]{geometry}
\addtolength{\oddsidemargin}{-.5in}%
\addtolength{\evensidemargin}{-.5in}%
\addtolength{\textwidth}{1in}%
\addtolength{\textheight}{1.3in}%
\addtolength{\topmargin}{-.8in}%
\usepackage[english]{babel}
\usepackage[utf8x]{inputenc}
\usepackage[T1]{fontenc}
\usepackage{authblk, algorithm, algorithmic}
\usepackage{amsfonts}
\usepackage{amsmath}
\usepackage{amsthm}
\usepackage{mathtools}
\usepackage{graphicx}
\usepackage{subcaption}
\usepackage[colorinlistoftodos]{todonotes}
\usepackage[colorlinks=true, allcolors=black]{hyperref}
\bibliographystyle{chicago} 
\usepackage{bm}
\usepackage{textcomp}
\usepackage{subcaption}
\usepackage{placeins,bbm}


\def\T{{ \mathrm{\scriptscriptstyle T} }}
\newtheorem{theorem}{Theorem}[section]

\newcommand{\norm}[1]{\Vert#1\Vert}

\newcommand{\bw}{\mathbf{w}}
\newcommand{\balpha} {\mbox{\boldmath $\alpha$}}

\newcommand{\argmin}{\operatornamewithlimits{argmin}}

\newcommand{\bs} {\mathbf{s}}

\newcommand{\bX} {{\mathbf X}}
\newcommand{\bU} {{\mathbf U}}

\newcommand{\bb} {{\mathbf b}}
\newcommand{\bz} {{\mathbf z}}

\newcommand{\by} {{\bf y}}

\newcommand{\cG}{{\cal G}}
\newcommand{\cW}{\mathcal{W}}

\newlength\myindent
\setlength\myindent{1em}







\pdfminorversion=4
\newcommand{\red}[1]{\textcolor{red}{#1}}

\begin{document}
\def\spacingset#1{\renewcommand{\baselinestretch}%
{#1}\small\normalsize} \spacingset{1}


\if0\blind
{
  \title{\bf Generative Multi-purpose Sampler \\ for Weighted M-estimation}
  \author{
    Minsuk Shin
    
    Department of Statistics, University of South Carolina,\\
    Shijie Wang \\
    Department of Statistics, University of South Carolina,\\
    and \\
    Jun S. Liu\\
    Department of Statistics, Harvard University\\}
    \date{}
  \maketitle
} \fi

\if1\blind
{
  \bigskip
  \bigskip
  \bigskip
  \begin{center}
    {\LARGE\bf \bf Generative Multi-purpose Sampler \\ for Weighted M-estimation}
\end{center}
  \medskip
} \fi

\bigskip
\begin{abstract}
\setstretch{1.5}
To overcome  computational bottlenecks of various data perturbation procedures such as the bootstrap and cross validations, we propose  the {\it Generative Multi-purpose Sampler} (GMS), which directly constructs a generator function to produce solutions of weighted M-estimators from a set of given weights and tuning parameters. The GMS is implemented by a single optimization procedure without having to repeatedly evaluate the minimizers of weighted losses, and is thus capable of significantly  reducing the computational time. We demonstrate that the GMS framework enables the implementation of various statistical procedures that would be unfeasible in a conventional framework, such as iteratedbootstrap procedures and cross-validation for penalized likelihood. To construct a computationally efficient generator function, we also propose a novel form of neural network called the \emph{weight multiplicative multilayer perceptron} to achieve fast convergence. 
An \texttt{R} package  called \texttt{GMS} is provided, which runs under \texttt{Pytorch}  to implement the proposed methods and allows the user to provide a customized loss function to tailor to their own models of interest.     


\end{abstract}

\noindent
{\it Keywords:} Weighted M-estimation,
Bootstrap/resampling,	 
Cross-validation,	 
Scalable Computation,	 
Iterated Bootstrap	
\vfill

\newpage
\spacingset{1.5} 

\newpage

\spacingset{1.5} 
\section{Introduction}\label{sec:GMS}
Consider a canonical setting in which $\by = \{y_1,\dots,y_n\}$ are i.i.d. observations following a statistical model with the parameter of interest  denoted by  $\theta\in\Theta\subset\mathbb{R}^p$.
In some instances such as regression analysis, one may also include predictors or covariate variables for each observation.
An efficient estimator of $\theta$ can often be found by solving the following (penalized) optimization problem: $\hat{\theta}=\argmin_{\theta} L_{\by}(\theta)$, where $L_{\by}(\theta)\equiv \frac{1}{n}\sum_{i=1}^n \ell_\eta(\theta;y_i)$ with $\ell_\eta( \cdot )$ being a suitable loss function with an auxiliary parameter $\eta$.
The resulting $\hat{\theta}$ is often referred to as an {\it M-estimator} \citep{huber1992robust}. For example,  the maximum likelihood estimator (MLE) is a special  M-estimator with the loss function being set as the negative log-likelihood function.

To assess the variability of the M-estimator $\hat{\theta}$, we study behaviors of the following { tunable weighted M-estimators} as inspired by the bootstrap methods \citep{efron1979bootstrap}: \vspace{-0.1cm}
\begin{equation}\label{eq:loss}
\hat{\theta}_{\bw, \lambda,\eta}=\argmin_{\theta} \left[
\frac{1}{n}\sum_{i=1}^n w_i\ell_\eta(\theta;y_i)+\lambda u(\theta)\ \right]
\stackrel{\Delta}{=} \argmin_{\theta}
L_{\by}(\theta;\bw,\lambda,\eta), 
\vspace{-0.1cm}
\end{equation} 
where  $\eta\in\mathbb{R}^+$ is an auxiliary parameter of the loss, $u(\cdot)$ is a penalty function on the parameter with a tuning parameter $\lambda$ that can be set to zero for non-penalized settings, and 
 $\bw=(w_1,\dots,w_n)^\top \in \cW$ 
is a vector of weights following distribution $\pi(\bw)$. 
 The auxiliary parameter $\eta$ tunes the loss function. For example,   in quantile regression models, $\eta\in(0,1)$ represents the quantile level and the loss function takes the form 
 $\ell_\eta(\theta; y_i, X_i) = \rho_\eta(y_i - X_i^\top\theta), \ \ \ \mbox{where } \ \rho_\eta(t) = t(\eta- I(t<0))$. When the loss function has no auxiliary parameter, we simply denote the loss and the resulting estimator  by $\ell(\theta;y_i)$ and $\hat\theta_{\bw,\lambda}$, respectively. 

The formulation of \eqref{eq:loss} applies to a wide range of statistical procedures. For example, the classical  bootstrap procedure of \cite{efron1979bootstrap} corresponds to  $\bw\sim\text{Multinom}(n, \mathbbm{1}_n/n)$, where ${\mathbbm{1}_n}$ is a $n$-dimensional vector of one, and $u(\theta)=0$.
Random-weight bootstrap procedures 
can be formulated  by imposing a general distribution on $\bw$ that has a mean of one, finite variance, and sum to $n$. Its theoretical properties such as consistency  have been studied \citep{praestgaard1993exchangeably,cheng2010bootstrap,barbe2012weighted}. A special and most well-known form of the random-weight bootstrap is to set $\bw\sim n\times \text{Dirichlet}(n;\mathbbm{1}_n)$ as in the {\it Bayesian Bootstrap} \citep{rubin1981bayesian} and  {\it Weighted Likelihood Bootstrap} \citep{newton1994approximate}. Theoretical investigations and improvements of  the bootstrap methods have been considered in a large body of literature \citep{chatterjee2005generalized,mccarthy2018calibrated,hall1988bootstrap,efron1987better,hahn1995bootstrapping,kleiner2014scalable}. 

Iterated bootstrap procedures
are often employed to reduce 
the bias associated with a statistical inference procedure and/or improve the coverage precision of confidence intervals \citep{hall1988bootstrap}. A most frequently cited procedure is the double bootstrap, which  first bootstraps and infers the parameter or prediction, and then estimates the bias of each bootstrapped solution 
via a second-level bootstrap. In \eqref{eq:loss}, the double bootstrap procedures can be represented by setting a hierarchical weight distribution such that ${\bs}=\{s_1,\dots,s_n\}\sim \text{Multinom}(n,
\mathbbm{1}_n/n)$ and $\bw\mid {\bs}\sim \text{Multinom}(n, \bs/n)$.
These iterated bootstrap methods can be shown to provide more accurate confidence coverage  (i.e., the second  or higher-order accuracy)  compared with single bootstraps and asymptotic approximations \citep{martin1992double, mccarthy2018calibrated,hall2013bootstrap,lee1999effect,lee1995asymptotic}. However, iterative bootstraps are computationally very expensive and are rarely used in practice  when the data are of moderate to large sizes.

The tunable weighted M-estimation in \eqref{eq:loss} can also represent $K$-fold cross-validation. 
For pre-selected folds, such as a group of sample indices $I_1,\dots,I_K$, we set $w_i=0$ for $i$ in the fold of interest, say $I_1$, and set $w_i=1$ in all other folds. 
This means that the observations in $I_1$ will be ignored during training, rendering $I_1$ to be test samples. If $u(\cdot)=\norm{\cdot}_1$, the evaluated $\hat\theta_{\bw,\lambda}$ is equivalent to the LASSO estimator \citep{tibshirani1996regression}, based on a tuning parameter $\lambda$, trained without using the samples in the considered fold $I_1$, resulting in a cross-validated LASSO. The computational burden of the cross-validation linearly increases with the fold size $K$ and the candidate set size of the tuning parameter, and a typical amount is at least a few hundreds of repetitive computations. 


While aforementioned weighted M-estimation procedures  are widely used  in statistics and science, the computational bottleneck caused by their repetitive nature poses significant practical difficulties. 
To alleviate these computational difficulties, we propose a computational strategy based on a  neural network-based generative process, called the \emph{Generative Multi-purpose Sampler} (GMS) (with the \emph{Generative Bootstrap Sampler} (GBS) as a special case for bootstrap).
Instead of repeating the same optimization process for various combinations of weights $\bw$'s and parameters $\lambda$'s and $\eta$'s, the GMS constructs a generator function that takes $(\bw,\lambda,\eta)$ as input and returns the corresponding weighted M-estimator $\hat\theta_{\bw,\lambda,\eta}$. In addition to taking advantage of the high representation power of neural networks, a key idea  for the GMS to achieve the desired computational efficiency gain is to minimize an integrative loss in the training of GMS, which optimizes both the M-estimation and the parameters employed by the GMS simultaneously.

The rest of the article is organized as follows. Section~\ref{sec:GMS_sub} introduces the general GMS framework and uses a toy example to explain its potential gains. Section~\ref{sec:bootex} details its specialization for the bootstrap, namely the {\it generated bootstrap sampler} (GBS). Section~\ref{sec:CV} discusses the training of GMS for cross-validation with Lasso and quantile regression. Section~\ref{sec:computation} provides details on the neural network structures and detailed computational aspects of  GMS. Section~\ref{sec:conclusion} concludes with a brief discussion.

\section{Generative Multi-purpose Sampler}
\label{sec:GMS_sub}


\subsection{The basic formulation}
We view the weighted M-estimator $\hat{\theta}_{\bw,\lambda,\eta}$ as a function of the weight $\bw$, the tuning parameter $\lambda$, and the auxiliary parameter $\eta$, i.e., $G(\bw,\lambda, \eta)$, and attempt to approximate it by a member in a suitable family of functions $\cG =\{  G_{\phi}:\mathbb{R}^{n+2} \mapsto\mathbb{R}^p, \phi\in\Phi  \}$, where $\Phi$ is the space of parameters that characterize a function in the family. By doing so,
we  turn the unrestricted optimization problem in (\ref{eq:loss}) into a  restricted optimization problem in the functional space, i.e.,  finding a proper parameter of  the generator function  such that, for all $\bw\in\mathcal{W}$, $\lambda\in \mathbb{R}^+$, and $\eta\in \mathbb{R}^+$, 
\vspace{-0.4cm}
\begin{equation}\label{eq:GMS-loss}
\widehat \phi =  
\argmin_{\phi\in \Phi  }
L_\by(G_\phi(\bw,\lambda,\eta);\bw,\lambda,\eta),
\vspace{-0.4cm}
\end{equation}
A slightly less ambitious, but more robust, 
formulation is to solve \vspace{-0.4cm}
\begin{equation}\label{eq:GMS}
\widehat \phi = \argmin_{\phi\in \Phi } \mathbb{E}_{\bw,\lambda,\eta}\left[ L_\by(G_\phi(\bw,\lambda,\eta);\bw,\lambda,\eta)\right],
\vspace{-0.4cm}
\end{equation}
where $\mathbb{E}_{\bw,\lambda,\eta}( \cdot )$ is taken with respect to a proper distribution of $(\bw,\lambda,\eta)$ defined on $\cW \times \mathbb{R}^+\times \mathbb{R}^+$. We name this generative framework in \eqref{eq:GMS} as the GMS.
For non-penalized settings without the auxiliary parameter $\eta$, we simply denote the generator function by $G(\bw)$. We also use the notation $\widehat G = G_{\widehat\phi}$.  
The weight distribution for  Efron's nonparametric bootstrap is simply $\bw \sim \text{Multinom}(n,\mathbbm{1}_n/n)$. For the Bayesian bootstrap \citep{rubin1981bayesian},  $\bw/n\sim \text{Dirichlet}(n,\mathbbm{1}_n)$. The distributions of $\lambda$ and $\eta$ can simply be the uniform distribution on candidate sets of $\lambda$'s and $\eta$'s  chosen by the researcher. Another reasonable distribution of $\lambda$ and $\eta$ is to add random noises to a discrete set of candidate values (see Section \ref{sec:comp} for details). 


Suppose that $\hat\phi$ is the solution of \eqref{eq:GMS} for a sufficiently large family $\cG$ and a proper distribution on $\{\bw,\lambda,\eta\}$, $\mathbb{P}_{\bw,\lambda,\eta}$, supported on $\cW \times \mathbb{R}^+\times \mathbb{R}^+$. 
If the solution $\hat \theta_{\bw,\lambda,\eta}$  of \eqref{eq:loss} is unique for any given $(\bw,\lambda,\eta)$ in the support,
then $G_{\hat{\phi}}(\bw,\lambda,\eta)$ should be very close to $\hat \theta_{\bw,\lambda,\eta}$ almost surely in $\mathbb{P}_{\bw,\lambda,\eta}$. It is easy to see this point by contradiction -- if not, then there exist $\epsilon>0$ and  a subset ${S}^* \subset \cW \times \mathbb{R}^+\times \mathbb{R}^+$ 
  such that 
$\mathbb{P}_{\bw,\lambda,\eta}(S^\ast)>0$ and  $G_{\hat{\phi}}(\bw,\lambda,\eta)\leq  \hat \theta_{\bw,\lambda,\eta}-\epsilon$ on $S^*$. Thus, we can find another function that differs from  $G_{\hat{\phi}}$ only on ${S}^*$ and achieves a smaller   value in \eqref{eq:GMS}. 

A main takeaway from this argument  is that optimizing the integrative loss over the space of $(\bw,\lambda,\eta)$ instead of the individual loss is appropriate for training.
To benefit from this formulation, we must choose an appropriate family $\cG$ of functions $G_\phi$ and a suitable distribution $\mathbb{P}_{\bw,\lambda,\eta}$ to cover the hyperparameter space of interest. 
As demonstrated by our empirical studies on a wide range of problems, 
restricting $\cG$ to be a class of neural networks and choosing  a reasonable distribution $\mathbb{P}_{\bw,\lambda,\eta}$ appears to work well (see details in Section \ref{sec:comp}).

As shown in  \cite{cybenko1989approximation} and \cite{ lu2017expressive},  {\it Multi-Layer Perceptrons} (MLP), or equivalently, {\it Feed-forward Neural Networks} (FNNs), are theoretically capable of   approximating any Lebesgue integrable function when the numbers of neurons and layers are sufficiently large.  Also, recent successful applications of deep neural networks in a variety of  data-rich fields provide compelling evidence supporting the use of over-parameterized MLPs and other types of neural networks for approximating extremely complicated functions \citep{goodfellow2014generative,arjovsky2017wasserstein}. 
To train a neural network to achieve the task in \eqref{eq:GMS}, we employ a  {backpropagation} algorithm \citep{rumelhart1986learning} along with {\it Stochastic Gradient Descent} (SGD)  and its variants. 
More details are given in Section \ref{sec:NN}.

\subsection{Intuitions for potential gains}
Imagine that we have independent weight vectors $(\bw^{(1)},\lambda^{(1)},\eta^{(1)}),\ldots, (\bw^{(M)},\lambda^{(M)},\eta^{(M)})$ from $\mathbb{P}_{\bw,\lambda,\eta}$, we can approximate the expectation in (\ref{eq:GMS}) by \vspace{-0.4cm}
\begin{equation}\label{eq:MC_Approx}
\mathbb{E}_{\bw,\lambda,\eta}\left[ L_\by(G(\bw,\lambda,\eta);\bw,\lambda,\eta)\right]\approx 
    \frac{1}{M} \sum_{m=1}^M L_\by(G(\bw^{(m)},\lambda^{(m)});\bw^{(m)},\lambda^{(m)},\eta^{(m)}).
    \vspace{-0.4cm}
\end{equation}
$M$ do not needs to be very large ($M$=100, say) since a small number of samples of $(\bw,\lambda,\eta)$ can be generated continuously within the iterative SGD  algorithm to aid the fitting: after updating the FNN parameter $\phi$ with SGD based on (\ref{eq:MC_Approx}), we use the newly created samples to evaluate the fit and to provide refreshed gradient.
Thus, the two optimization tasks, i.e., minimizing the loss function $L_\by$ and finding optimal $\phi$ for the generator $G(\cdot)$, co-evolve and help each other.

If we were to cast the task of training a generator in a classical machine learning framework, we would have to first obtain a set of training samples, $\{(\bw^{(b)},\lambda^{(b)}, \hat{\theta}^{(b)})\}_{b=1}^B$, where $\hat\theta^{(b)}=\hat\theta_{\bw^{(b)},\lambda^{(b)}}$,  by evaluating $B$ optimizations in \eqref{eq:loss} with $(\bw^{(b)},\lambda^{(b)})$ for $b=1,\dots,B$ (ignoring $\eta$ for simplicity in this case). Then, one may try to learn a function $g$ by minimizing \vspace{-0.4cm}
\begin{equation}\label{eq:classicML}
\hat g = \argmin_g\sum_{b=1}^B\norm{\hat\theta^{(b)} - g(\bw^{(b)},\lambda^{(b)})}^2,
\vspace{-0.1cm}
\end{equation}
under the $l_2$-distance $\norm{\cdot}$.
However, this squared-loss  only measures the distance between the fitted generator $\hat{g}( \bw,\lambda)$ and its training true value $\hat{\theta}_{\bw,\lambda}$. As a result, it cannot inform us how to improve the fitting of the original statistical loss in \eqref{eq:loss} other than a simple interpolation. Thus, the function trained in this manner tends to be inaccurate if $B$ is small,  or may be prohibitively expensive in computation if we must rely on a large $B$, in which case
computational advantages of the generative process would be non-existing or limited.

Training the generator function $G$ in conjunction with minimizing the loss function via the GMS  formulation \eqref{eq:GMS} is significantly more efficient. The classical loss \eqref{eq:classicML} fits only on the  training data with a limited size,  $\{(\bw^{(b)},\lambda^{(b)}, \hat{\theta}^{(b)})\}_{b=1}^B$, resulting in an over-fitting issue. The GMS, on the other hand, is trained using the weights and tuning parameters generated from a predefined distribution without requiring additional optimizations for \eqref{eq:loss}, and  generating $\bw$ and $\lambda$  is nearly cost-less. As a result,  the GMS training procedure not only seeks the minimizer of $L_\by(\theta;\bw,\lambda)$, but also allows for the use of an almost infinite number of training weights and tuning parameters during the training step, thereby avoiding  over-fitting.    
\subsection{Illustration with a simple example}\label{sec:simple_example}
A novel aspect of our formulation is
represented by the minimization of the integrative loss (\ref{eq:GMS}), which combines the individual optimization step required by each classical replication  with the approximation of the functional form $G$.
Let us consider the bootstrap procedure for a toy  linear regression example with data $(y_i, X_i),$ $i=1,\ldots,n$, and the loss function $\ell(\theta;y_i,X_i) = (y_i-X_i^\top\theta)^2$ and $\lambda=0$. 
For this problem, we can obtain the closed-form  solution of the optimization problem for each bootstrapped sample: $G_0(\bw) = (\bX^\top W \bX)^{-1} \bX^\top W \by$, where $\by=(y_1,\ldots,y_n)^\top$, $\bX=(X_1^{\top},\ldots, X_n^{\top})^\top$, and $W=\text{diag}(\bw)$.
Thus, a bootstrap procedure would follow  simple steps: for $b=1,\ldots, B$,
generate $\bw^{(b)}=(w_1^{(b)}, \dots, w_n^{(b)})\sim \text{Multinom}(n,\mathbbm{1}_n/n)$ or $n\times \text{Dirichlet}(n,\mathbbm{1}_n)$, 
and
then for each $\bw^{(b)}$, plug in the formula to get   $\hat{\theta}^{(b)}=G_0(\bw^{(b)})$. However, if one does not have the closed-form formula but has to solve numerically the minimization problem  of \eqref{eq:loss} for every generated $\bw^{(b)}$, the bootstrap procedure can be prohibitively demanding in computation. Thus, our GMS formulation via \eqref{eq:GMS} can be thought of as an automatic way to find a highly accurate approximation to  the closed-form solution (in the form of a neural network) of  the minimization problem of \eqref{eq:loss}.
Once this solution $\widehat G$ is found, one can easily generate bootstrap estimators with almost no computational cost.

 For a case of $n=100$ and $p  = 10$, we set  the true coefficient $\theta=\{1,0,\dots,0\}$ and the regression variance one. The predictors are independently generated from $N(0,I_p)$. Even though this example is simple, constructing the generator function is non-trivial, because the generator function's domain dimension is   100, and the dimension of its codomain is 10. Training a 100-dimensional function with 10-dimensional codomain  requires a large number of training samples in the framework of \eqref{eq:classicML}. This fact is reiterated in Figure \ref{fig:classicML}. We generate a data set and evaluate random weight bootstrap estimators with $\bw\sim n\times\text{Dirichlet}(n;\mathbbm{1}_n)$, and  then numerically evaluate the average loss of \eqref{eq:loss} on various weights from the trained generator for the classical machine learning approach with $B=500$ and $B=5,000$, as well as the GMS. 
 We initialize the optimization in different five points for each procedure. 
 
  We consider two performance measures for this  example: the training loss specified in \eqref{eq:classicML} and the \emph{integrative prediction loss} (IPL) that can be defined as 
  $\mathbb{E}_\bw\norm{\hat\theta_{\bw} - g(\bw)}^2$.
  The IPL is approximated by using $1,000,000$ Monte Carlo evaluations, and the loss values are multiplied by $n$ to adjust for the scale of $\text{Var}(\hat \theta)$. Note that the GMS trains its generator $G$ by minimizing the integrative loss \eqref{eq:GMS}, whereas the naive generator $g$ is trained using the $l_2$-loss in \eqref{eq:classicML} with  $B=500$ and $5,000$ training samples, respectively. As expected, Figure~\ref{fig:classicML}(a) shows that the training $l_2$-losses for the naive procedures are significantly lower than those for the GMS. 
  However, the IPLs of the considered methods behave quite differently. The naive minimizers (for the cases with $B=500$ and $5,000$)  first decrease their IPLs rapidly, but after 200 iterations their IPLs begin to increase. In contrast, the GMS  seamlessly reduces its IPL. The poor predictive performance of the naive procedure stems from the fact that the $l_2$-loss encourages the generator function $\widehat g$ to overfit the training set $\hat \theta^{(1)},\dots,\hat \theta^{(B)}$. Unlike the conventional machine learning modeling, the GMS is quite resistant to overfitting, as we can sample $\bw$'s at near-zero computational cost during the  training of the generator function. 

\begin{figure}[t]
\vspace{-1.5cm}

    \centering
    \begin{subfigure}[b]{0.42\textwidth}
        \includegraphics[width=\textwidth]{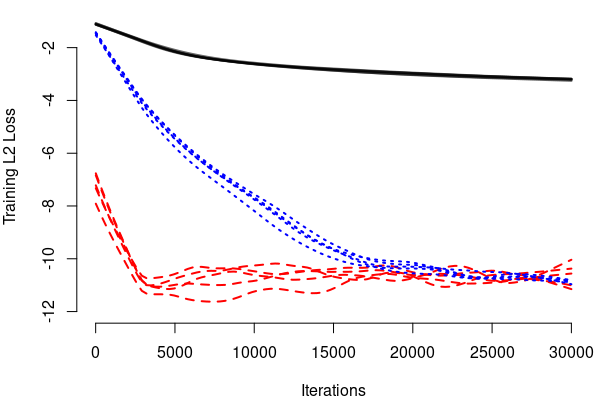}
        \vspace{-0.7cm}
        \subcaption{\small $\log (n\sum_{b=1}^B\norm{\hat\theta_{\bw^{(b)}} - g(\bw^{(b)})}^2/B)$.}
    \end{subfigure}
    \begin{subfigure}[b]{0.42\textwidth}
        \includegraphics[width=\textwidth]{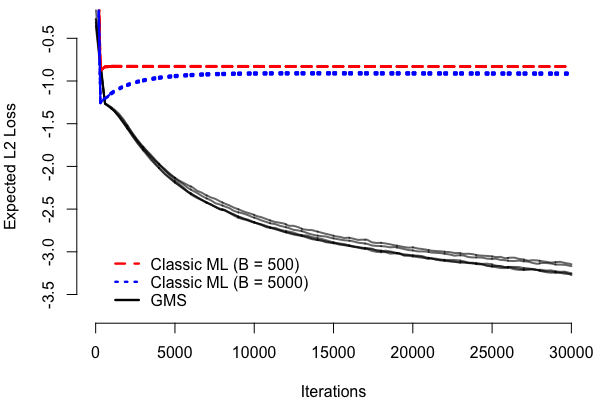}
        \vspace{-0.7cm}
        \subcaption{\small$\log (n\mathbb{E}_\bw[\norm{\hat\theta_{\bw} - g(\bw)}^2])$.}
    \end{subfigure}
    \caption{\small Trace plots of (a) the training loss,  and (b) the integrative prediction loss, in the logarithmic scale. Five lines for each optimization represent five distinct initializations; and the red dashed and blue dotted lines indicate the conventional ML with $B=500$ and $B=5000$, respectively. }
    \label{fig:classicML}
\end{figure}

 \section{Generative Bootstrap Samplers}\label{sec:bootex}
  \subsection{Bootstrap and subgroup bootstrap}\label{sec:Blockbootstrap}
   
   The simplest use of the GMS is to bootstrap M-estimators, which is a special case of  form \eqref{eq:GMS} without $\eta$ and $u(\cdot)$. The weight distribution is $ \text{Multinom}(n,\mathbbm{1}_n/n)$ (or $\ n\times\text{Dirichlet}(n,\mathbbm{1}_n)$ for the  Bayesian bootstrap). More precisely, we let $\phi$ be the parameter underlying the generator $G$ and solve the optimization problem: $\widehat \phi = \argmin_{\phi\in \Phi } \mathbb{E}_{\bw}\left[ \frac{1}{n}\sum_{i=1}^nw_i\ell(G_\phi(\bw);y_i)\right]$.
 We call this simple GMS application the \emph{Generative Bootstrap Sampler} (GBS). 
 
Despite its considerable efficiency, the GBS framework has a fundamental limitation for practical bootstrap applications: the dimension of the generator domain equals the sample size $n$. 
Even when computationally efficient neural networks are used to model the generator, the convergence  is quite slow when the input dimension is high (say, tens of thousands).
We may further encounter  technical issues such as memory shortage as well, which is particularly severe for big data.  
To address this limitation, we consider a subgroup weighting strategy, which divides the data set into subgroups and assigns equal weights to observations within each subgroup. The  subgrouping idea is primarily used for bootstrapping time series data sets, referred to as {\it block bootstrap} \citep{lahiri1999theoretical,hardle2003bootstrap}, in order to preserve the temporal association within bootstrapped samples. In contrast to the time series applications, we use subgrouping (or blocking) to reduce the number of weights, or more precisely, the domain dimension of the generator function so as to save computational costs. 

Let $[n]$ denote the index set $\{1,\ldots,n\}$ of the observations. We consider an exclusive and exhaustive partition: $I_1,\dots,I_S $ $\subset [n]$ such that $I_i \cap I_j =\emptyset, \forall i\neq j$, and  $\cup_{s=1}^S I_s= [n]$. 
Without loss of generality, we assume that the size of each $I_s$ is the same, i.e., $|I_s| = n/S$ for $s=1,\dots,S$. We define a subgroup assignment function  $h: [n]\mapsto [S]$ such that $h(i)=s$ if $i\in I_s$.
Then, for  $\{\alpha_1,\dots,\alpha_S\}^\T\sim \mathbb{P}_{\bm{\alpha}}$,  with $\mathbb{P}_{\bm{\alpha}}$ being an $S$-dimensional weight distribution, we impose the same value of weight on all elements in a subgroup as \vspace{-0.4cm}
\begin{equation}\label{eq:sub_assign}
w_i = \alpha_{h(i)} \ \ \text{ for }  \ i=1,\dots,n.
\vspace{-0.4cm}
\end{equation}
 and we denote $\bw_{\bm\alpha}=\{\alpha_{h(1)},\dots,\alpha_{h(n)}\}^\T\in\mathbb{R}^n$. As a result, it follows that  $\alpha_{h(i)}=\alpha_{h(k)}$, if $i,k\in I_s$ for some $s$. Similar to the vanilla GBS, setting 
 $\bm{\alpha}\sim \mbox{Multinomial}(S,\mathbbm{1}_S/S)$ or $\bm{\alpha}\sim S\times \mbox{Dirichlet}(S,\mathbbm{1}_S)$  result in the block-based nonparametric bootstrap and Bayesian bootstrap,
 respectively. 


As an illustration, we consider a simple  linear regression example by generating  a data set from the model with $n=1000$, $p=10$, the coefficients being a sequence of equi-spaced values between $-2$ and $2$, and $\sigma^2=2$. Each covariate is drawn i.i.d. from $N(0,1)$, and the regression variance is set to one.   The resulting domain dimension of a vanilla $G$ is 1000. Figure \ref{fig:hist_GMS}  shows individual  histograms of bootstrap distributions with varying subgroup sizes. 
 Even when the number of subgroups is  tiny ($S=5$), the obtained bootstrap distributions are acceptable,
 although the variability tends to be underestimated. As $S$ increases ($S=25$), the quality of the approximation of the subgroup bootstrap distribution improves significantly. When $S=100$, the subgroup bootstrap distributions are indistinguishable from the target ones. 
When we use 100 subgroups (10 observations in each subgroup), the input dimension is reduced to 100 from  the original $1000$ but the resulting bootstrap distributions are nearly identical to those from the standard bootstrap (see Figure~\ref{fig:block}). 
\begin{figure}[t]
    \centering
        \begin{subfigure}[b]{0.8\textwidth}
        \includegraphics[width=\textwidth]{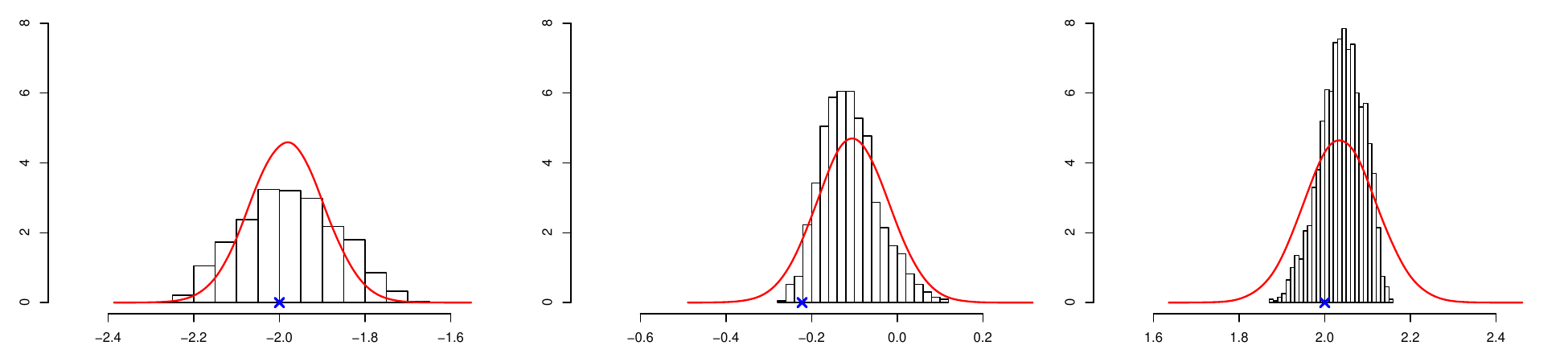}
        \subcaption{\small  $S=5$.}
    \end{subfigure}
    \begin{subfigure}[b]{0.8\textwidth}
        \includegraphics[width=\textwidth]{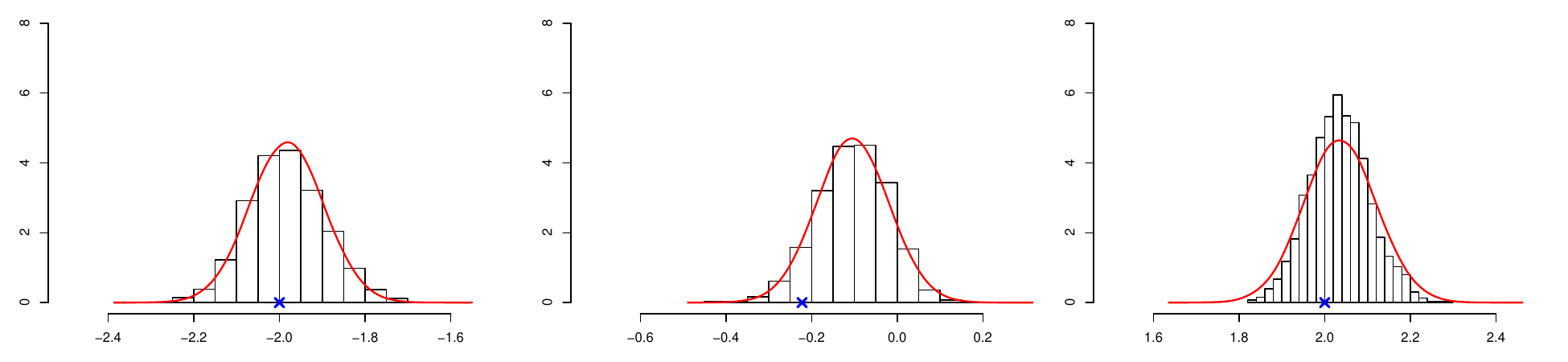}
        \subcaption{\small $S=25$.}
    \end{subfigure}
    \centering
    \begin{subfigure}[b]{0.8\textwidth}
        \includegraphics[width=\textwidth]{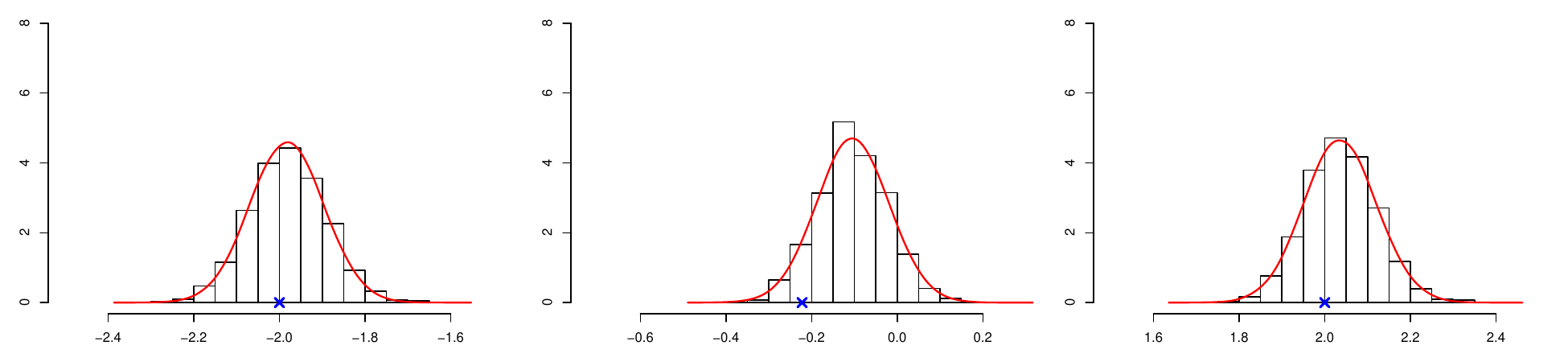}
        \subcaption{\small $S=100$.}
    \end{subfigure}
    \caption{\small Histograms of block bootstrap distributions with various $S$ for the coefficient of $X_1$ (top), $X_5$ (middle), and $X_{10}$ (bottom) for each subfigure. The red line indicates the density function of the target distribution (of the standard bootstrap). }\label{fig:hist_GMS}
\end{figure}


\noindent{\bf Remark. } 
Under some regularity conditions, one can show that the subgroup bootstrap is consistent when $S$ is of a higher order than $\sqrt{n}$ (see the Supplementary Materials for a formal proof).  

\begin{figure}[t]
    \centering
        \includegraphics[width=0.75\textwidth]{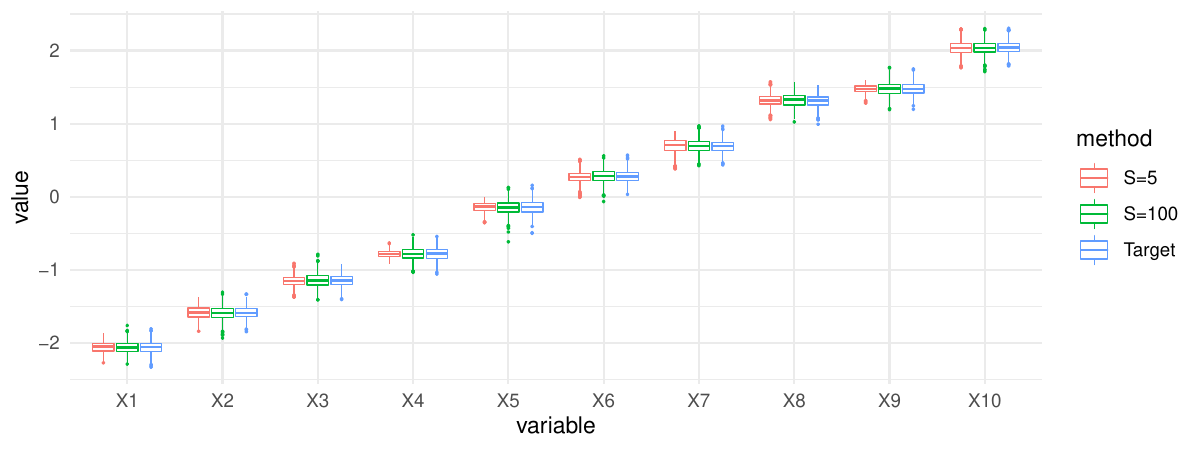}
        \vspace{-0.3cm}
    \caption{\small Comparisons of  subgroup bootstraps across different numbers of blocks. }\label{fig:block}
\end{figure}

\subsection{Iterated  bootstrap}\label{sec:double}
The iterated bootstrap method was proposed to improve the inference accuracy of the simple bootstrap method, and was shown both  theoretically and empirically  to achieve a higher-order accuracy for the coverage of the  constructed confidence intervals  and bias-corrections \citep{martin1992double, mccarthy2018calibrated,hall2013bootstrap,lee1999effect,lee1995asymptotic}. 
More precisely, an iterated bootstrap procedure
involves nested levels of data resampling. 

The double bootstrap, which is the simplest iterated bootstrap, 
first creates $B$ bootstrap samples, $\by_{b}^*$, for $b=1,\dots,B$ by resampling from the original data set, and then, for each bootstrapped sample $\by_{b}^\ast$, creates $C$ second-level bootstrap samples, $\by_{bc}^{**}$, $c=1,\ldots,C$, by resampling from   $\by_{b}^*$. For each $\by_{b}^*$ and $\by_{bc}^{**}$, we denote the corresponding estimator of $\theta$ by $\hat\theta_{b}^{*}$ and  $\hat\theta_{bc}^{**}$, respectively. By iterating this step, we can simply extend this to more iterated bootstrap cases.

 Various procedures for constructing confidence intervals using bootstrap have been proposed, such as  the \emph{percentile} method \citep{hall1992bootstrap}, the \emph{studentized} method \citep{hall1988theoretical,efron1979bootstrap}, the \emph{Bias-Corrected and accelerated} method \emph{BC$_a$}   \citep{efron1987better}, and  \emph{Approximated Bias Correction} (ABC; \cite{diciccio1992more}), etc. Even though BC$_a$ and ABC procedures enjoy the second-order accuracy (fast convergence in coverage error), a practical implementation of these procedures are not trivial since it is difficult to calculate  their acceleration factor for general models. On the other hand, the percentile procedure is only first-order correct, and the studentized procedure requires an iterated bootstrap unless an explicit form of the standard error of the bootstrap estimator is available.
To improve the quality of the constructed CI,
we consider using double bootstraps  as in the coverage calibration method \citep{hall1988bootstrap,hall1986bootstrap} and studentized CI procedure \citep{hall1988theoretical}. 
The calibrated percentile two-sided CI via  double bootstrap achieves the second-order accuracy $O(n^{-1})$, while its single bootstrap counterpart 
only attains a rate of $O(n^{-1/2})$.  However, applying the conventional double bootstrap requires undesirably intensive computation: a total of $B\times C$  evaluations of bootstrap estimators $\hat\theta_{bc}^{**}$ for $b=1,\dots,B$ and $c=1,\dots,C$. \cite{lee1999effect} showed that $B$ and $C$ should be of a higher order than $n^4$ and $n^2$ for two-sided CIs and of order $n^2$ and $n$ for one-sided CIs, respectively, so that the coverage error rate of the Monte Carlo interval is no greater than that of  the theoretical double bootstrap interval. The authors considered $B=1000$ and $C=500$ in their simulations, resulting in a total of  $500,000$ evaluations, which is an  unmanageable size  under the conventional bootstrap framework. 
\subsection{GBS for  iterated bootstrap}
Extending the GBS to  iterated bootstraps is immediate as it is a special case of \eqref{eq:GMS} with a weight distribution that has a hierarchical structure. 
For a $d$-level iterated bootstrap procedure, we may characterize its weight distribution  hierarchically: $\bw_{(1)}\sim \text{Multinom}(n, \mathbbm{1}_n/{n}),$ 
$\ldots,$ ${\bw}_{(d)}\mid {\bw}_{(d-1)}
\sim \text{Multinom}(n, \bw_{(d-1)}/{n})$.
The computational advantage of the GBS framework is particularly significant in these iterated situations.

One drawback of the standard nonparametric bootstrap is that each bootstrap sample only touches upon about $1-e^{-1}\approx63\%$ of the observations due to the nature of multinomial sampling, which appears to be somewhat wasteful. This loss is compounded and become more significant in iterated bootstraps.  
A smoothed version of these weight distributions is a hierarchy of Dirichlet distributions, which enable each $\hat{\theta}^\ast_b$ and $\hat{\theta}^\ast_{bc}$ to utilize all the observations \citep{cheng2010bootstrap,xu2020applications, praestgaard1993exchangeably}. Thus, we mainly consider  $\bw\mid{\mathbf z}\sim n\times\mbox{Dirichlet }(n,\bz)$ and $ {\mathbf z} \sim n\times\mbox{Dirichlet }(n, \mathbbm{1}_n)$. 
If a subgroup bootstrap as in  Section~\ref{sec:Blockbootstrap} is employed the subgrouped weights follow $\bw\mid{\mathbf z}\sim S\times\mbox{Dirichlet }(S,\bz)$ and $ {\mathbf z} \sim S\times\mbox{Dirichlet }(S, \mathbbm{1}_S)$. 
We train a generator function that covers both single and double bootstraps by adopting a probabilistic mixture of single and double bootstrap weights distributions; e.g., generate single or double bootstrap weights with 50\%-50\% chances.

\begin{figure}[t]
    \centering
      \begin{subfigure}[b]{0.35\textwidth}
        \includegraphics[width=\textwidth]{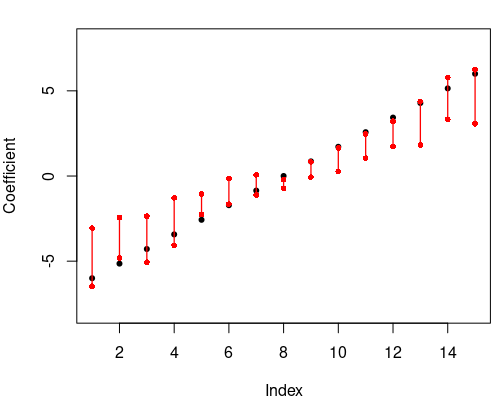}
  \end{subfigure}    
    \begin{subfigure}[b]{0.35\textwidth}
        \includegraphics[width=\textwidth]{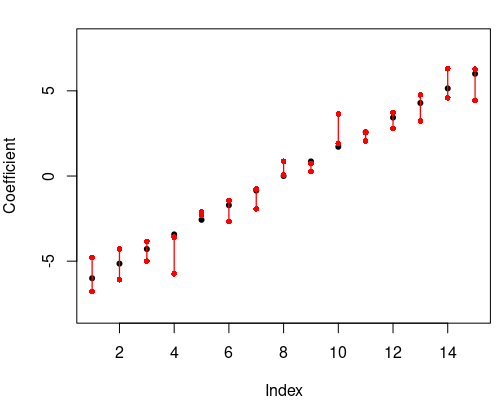}
  \end{subfigure}    
    \begin{subfigure}[b]{0.35\textwidth}
        \includegraphics[width=\textwidth]{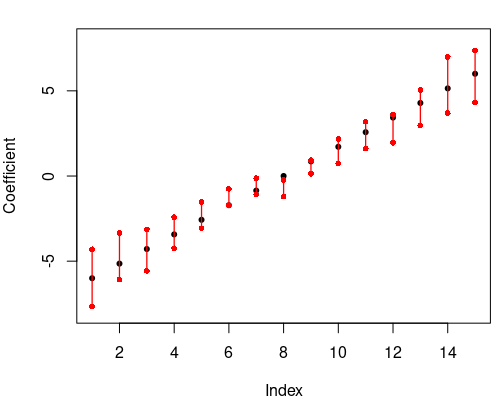}
    \end{subfigure}
\begin{subfigure}[b]{0.35\textwidth}
        \includegraphics[width=\textwidth]{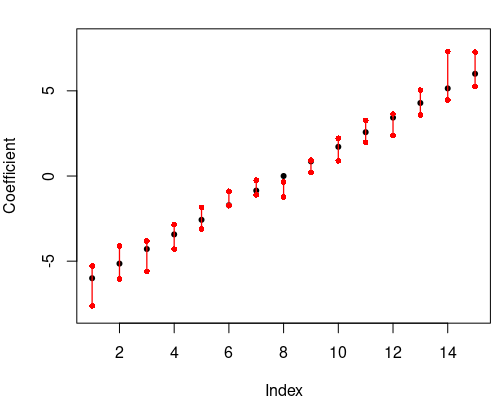}
    \end{subfigure}    
    \caption{\small 95\% CIs for the logistic regression example: The basic single bootstrap CI  (top left); the BCa CI (top right); the calibrated percentile bootstrap CI via double bootstrap (bottom left); a studentized bootstrap CI via double bootstrap (bottom right). True parameters are  marked by black dots.}\label{fig:intro_double}
\end{figure}

\subsection{An illustration: double-bootstrap for logistic regression}\label{sec:sim0}
We consider the standard logistic regression model: $y_i \sim \text{Bernoulli}\left(\frac{1}{1+\exp\{-X_i^\top \theta\}}\right)$, where $X_i\in \mathbbm{R}^p$ and  $\theta\in\mathbb{R}^p$ for $ i=1,\ldots, n$.
To apply the GBS to this model, we simply set 
$\ell (\theta;y_i) = (1-y_i)\log X_i^\T\theta + \log (1+\exp(-X_i^\T\theta)) $ in \eqref{eq:GMS}. We simulate a data set that contains  $n=400$ observations, each with $p=20$ covariates generated independently from the standard Gaussian. The true coefficient vector is set to be an equi-spaced sequence between $-6$ and $6$.         

We examine 95\% CIs constructed by various procedures, including a bias-corrected percentile CI (single bootstrap, denoted by ``basic''), a BCa CI \citep{efron1987better}, a calibrated percentile CI (double bootstrap), and a studentized CI (double bootstrap). The ``basic" CI is constructed as  $(2\hat\theta - q^*_{97.5\%}, 2\hat\theta -  q^*_{2.5\%} )$, where $q^*_{\beta}$ is the $\beta$-quantile of  the bootstrap distribution of  $\hat\theta^*$. The calibrated percentile CI is obtained as $(2\hat\theta - q^*_{\hat\alpha_U}, 2\hat\theta - q^*_{\hat\alpha_L})$, where $\hat\alpha_L$ and $\hat\alpha_U$ are calibrated coverage levels via the double bootstrap aiming at $2.5\%$ and $97.5\%$, respectively. The studentized CI is $(\hat\theta - \tilde t^*_{97.5\%}\hat s, \hat\theta - \tilde t^*_{2.5\%}\hat s)$, where $\tilde t^*_{\beta}$ is the $\beta$-quantile of the studentized bootstrap statistic, and $\hat s$ is the estimated standard error (a detailed description of these bootstrap procedures is given in Section \ref{sec:details_double} of the supplementary materials). Figure \ref{fig:intro_double} shows that, despite the fact that the single bootstrapped CI (top left) is bias-corrected, the resulting CI is strongly biased and its width is excessively wide with a low coverage  ($12/20$), and  BCa suffers from a similar issue. 
In contrast,  the two double bootstrap procedures result in better  CIs by both shortening  the widths and improving the coverage accuracy ($19/20$ coverage for both the calibrated and the studentized).

For the double bootstrapped CIs, we generate $5000$ bootstrap samples for the first-level  and 1000 for the second-level, resulting in  a total of  $5000\times 1000 = 5,000,000$ bootstrap evaluations. This  poses a significant computational challenge under the conventional framework. In comparison, once the generator function is trained (which takes less than 3 minutes for this example), the GBS produces $10,000$ bootstrap estimators in less than 0.1 second, and its computational advantage is even more significant when $n$ and $p$ are larger, as shown next.


\subsection{Scaling up towards large $n$ and  $p$} \label{sec:large_n_p}

We consider the same logistic regression model as in Section~\ref{sec:sim0} with
the true regression coefficients $\theta_j$'s   equally spaced between $-3$ and $3$.  We compare the performance of the GBS with those of the standard bootstrap, BCa, and the profile likelihood confidence interval with sample size  $n\in\{500,5000, 10000\}$ and dimension of covariates $p\in\{30,200,300\}$. This simulation is replicated independently 20 times.  
We examine properties of the 95\% CIs constructed by these bootstrap methods (i.e., the average coverage and average width, and their actual  computing time). For   standard bootstrap procedures, we consider both a parallel computing environment using 25 CPU cores (abbreviated as ``25C''), and a single-core  computation (i.e., ``1C''). 
The detailed setting is described in Section \ref{sec:comp}, and the specification of the computing server is given in the the supplementary materials. 
We use the \texttt{R} package \texttt{boot} to implement conventional bootstrap procedures.  The profile likelihood CI is based on an asymptotic approximation, and its computation is carried out  by using the \texttt{confint} function in \texttt{R}.  Due to the computational burden,  the conventional CI procedures for large sized data sets are too expensive, so we  only report the estimated computation times using two replicates. 

\begin{table}[t]

    \centering
    \resizebox{16cm}{!}{
    \begin{tabular}{c | ccr | ccr | ccr}
    \hline
    
     \hline
     
  \hline
     &  \multicolumn{3}{c}{$(n,p)=(500,30)$}  &  \multicolumn{3}{c}{$(n,p)=(5000,200)$} & \multicolumn{3}{c}{$(n,p)=(10000,300)$} \\  
     \hline
     
     \hline
    Method & Cov & Width & Time & Cov & Width& Time & Cov & Width & Time \\
\hline
  GBS1 (Basic) & 0.975 & 1.804  & 140.8 + \red{0.1} &  0.987 & 1.028 & 152.9 + \red{0.2} & 0.976 & 0.721 & 163.6 + \red{0.4}\\ 
  GBS1 (Percentile) & 0.752 & 1.804 & 140.8 + \red{0.1} & 0.217 & 1.028 &  152.9 + \red{0.2} & 0.199 &0.721  & 163.6 + \red{0.4} \\ 
  GBS2 (Student) & 0.962 & 1.548 & 140.8 + \red{15.6} & 0.960 & 0.904 &  152.9 + \red{45.0}& 0.931 & 0.651 & 163.6 + \red{63.9}\\ 
  GBS2 (Calibrated) & 0.933 & 1.463 & 140.8 + \red{15.6} & 0.954 & 0.894 & 152.9 + \red{45.0} & 0.938 & 0.661 & 163.6 + \red{63.9}\\ 
  \hline
  Basic (25C) & 1.000 & 1.899 & 8.4 & 1.000 & 1.237 &  539.6 & \texttt{NA} & \texttt{NA} & 4227.05 \\  
  Basic (1C) &  &  & 93.8 &  &  &  3833.3 &  &  & 25540.5\\  
 
  Percentile (25C) & 0.760 & 1.899 & 8.4 & 0.230 & 1.237 & 539.6  & \texttt{NA} & \texttt{NA} & 4227.05 \\ 
  BCa (25C) & 1.000 & 2.039 & 84.3 & \texttt{NA} & \texttt{NA} &  \texttt{NA} & \texttt{NA} & \texttt{NA} & \texttt{NA}\\ 
  Profile & 0.918 & 1.657 & 0.7 & 0.462 & 0.941 & 1310.8   & \texttt{NA} & \texttt{NA} & 8670.7\\ 
  \hline
    \end{tabular}
    }
    \caption{\small Results of  the simulation study for logistic regression models; ``GBS1'' and ``GBS2'' indicate single and double bootstraps implemented by the GBS, respectively; ``Cov'', ``Width'', and ``Time'' mean the averages (over 20 replicates) of the coverage, the width, and the actual computing time (seconds) of each evaluated  95\% CI, respectively; for the computation time of the GBS, the black and red numbers indicate training and generation time (including post processing time for the GBS), respectively.}
    \label{tab:sim_logit}
\end{table}

Table \ref{tab:sim_logit} compares  traditional bootstrap procedures with their GBS equivalents in various settings. The GBS procedures are comparable to their conventional counterparts (``Basic'' and ``Percentile''  in the table) in terms of the  coverage and width of the constructed CIs. When  $(n,p)=(500,30)$, the traditional bootstrap-based CIs are significantly faster to compute. However, as data size increases, the conventional bootstrap   becomes prohibitively expensive, taking more than an hour  for $(n,p)=(10000,300)$ using a parallel computation with 25C, and more than 7 hours using 1C.
Due to its heavy computational need, the BCa cannot produce meaningful results for moderately large data sets  (e.g., for $(n,p)=(5000,200)$ and $(10000,300)$). The profile likelihood procedure (``Profile''), which is based on an asymptotic approximation of the sampling distribution, is also quite expensive when data size becomes large.

For the double bootstraps, the conventional repetitive computations  take more than 2.5 hours with parallel computation using 25C for the case with $(n,p)=(500,30)$, and would have taken more than 48 days for the case with $(n,p)=(10000,300)$. As a result, the conventional double bootstrap procedures are infeasible for multiple replicates, so their results are omitted in Table \ref{tab:sim_logit}.
In contrast, the GBS training takes less than three minutes for all examined settings, while the generation and  post-processing  for the  double bootstrap take about one minute. Furthermore, 
the double-bootstrap GBS2s requires 
very little extra computational time, but achieves a significantly higher accuracy, than the single bootstrap GBS1s.
\section{Bootstrap Cross-Validation for Parameter Tuning Via GMS}\label{sec:CV}

Tuning parameter selection has been a challenging and  computationally intensive task for many statistical and machine learning algorithms since repetitive computations are often required  over a wide range of possible choices of the tuning parameter. 
We note that the GMS framework is not only applicable to bootstrap,  but can also be used to expedite the computation of  \emph{Cross-Validation} (CV) procedures. It is easy to see that for a weight $w_i=0$, the corresponding term in  the weighted M-estimation loss function \eqref{eq:loss} is  zero, 
which is equivalent to ignoring  observation $y_i$. 
More generally, we denote ${\mathbf w}_{(-I)} = \{w_1,\dots,w_n\}$ with $w_i=0$ for $i\in I$, and 
$\{w_i:i\not\in I\} \sim (n-|I|)\times\text{Dirichlet}(n-|I|;\mathbbm{1}_{n-|I|})$.
Thus, index sets $I$ and $I^c$ can be viewed as those for the test and training data, respectively. 
To train the CV generator without the bootstrapping aspect, one may employ a simpler weight distribution  than  the multinomial or Dirichlet, such as setting all the weights in a randomly selected fold to be zero, and the remaining to be one. 
Based on this setup, a simple modification of Algorithm \ref{alg:alg} (with strategies in Section \ref{sec:comp}) can be used to train the generator for the $K$-fold CV 
(more details  in the Supplementary Materials). 
Once the generator is trained, 
one can easily compute the estimated out-of-sample error across different tuning parameters by alternating zero weight for each fold.

More precisely, for $b=1,\dots,B$ and a tuning parameter $\lambda_l$ in a candidate set $\{\lambda_1,\dots,\lambda_L\}$, we set zero weights on a fold $I^*_k$ for $k=1,\dots,K$; i.e.,  $w_i^{(b,k)}=0$ for $i\in I^*_k$. 
For ${i\not\in I^*_k}$, we can set $w_i^{(b,k)}=1$ when only  CV is of interest, or let $\{w_i^{(b,k)}\}_{i\not\in I^*_k}\sim (n-|I^*_k|)\times\text{Dirichlet}(n-|I^*_k|,\mathbbm{1}_{n-|I^*_k|})$ so as to quantify uncertainty in the CV via bootstrap. The bootstrapped CV estimator without considering the test set $I^*_k$ with a tuning parameter $\lambda_l$, denoted by $\hat\theta_{(-I_{k}^*),\lambda_l}^{(b)}$, can be computed as $\widehat G(\bw^{(b,k)},\lambda_l)$. The CV loss for the $k$-th fold and $\lambda_l$ follows as $\hat e_{l}^{(b,k)} = \sum_{i\in I^*_k}\ell(\hat\theta_{(-I_{k}^*),\lambda_l}^{(b)}; y_i)/|I_k^*|$. 
After repeating this step for all the $K$ folds, we obtain the bootstrapped $K$-fold CV errors as $\bar e_{l}^{(b)}=\sum_{k=1}^K \hat e_{l}^{(b,k)}/K$.   
After obtaining $\bar e_l^{(b)}$ for $l=1,\dots,L$ and $b=1,\dots,B$, one can easily identify the bootstrap distribution of the out-of-sample loss  via the empirical distribution of $\{\bar e_l^{(b)}\}_{b=1,\dots,B}$ under $\lambda_l$, as well as confidence bands of the out-of-sample loss over $\{\lambda_1,\dots,\lambda_L\}$. 

Moreover, with $l^{(b)}=\argmin_l\{ \bar e_l^{(b)}\}$, the empirical distribution of $\{\lambda_{\text{min}}^{(b)} \stackrel{\Delta}{=} \lambda_{l^{(b)}}, \ b=1,\dots,B\}$  serves as the bootstrap distribution of the minimizer of CV errors and can naturally quantify the uncertainty of the chosen tuning parameter (an example is given in the left of Figure \ref{fig:cv}). 
For example, this bootstrap distribution  $\{\lambda_{\text{min}}^{(b)}\}$ provides us an alternative to the  \emph{ad hoc}  one-standard-error rule  commonly recommended for Lasso regression, in which one chooses the most parsimonious model whose CV error is no more than one standard deviate above that of the best model. 
In contrast, with the availability of the bootstrap distribution of $\lambda_{\text{min}}$, we may pursue a more parsimonious model by using the lower $(1-\alpha)$\% confidence bound of this  distribution as  our chosen $\lambda$.\\

\noindent{ \bf Cross-validation for LASSO and ridge regression.} 
 Two representative examples of the penalized M-estimation are ridge \citep{hoerl1970ridge} and LASSO regression models 
 \citep{tibshirani1996regression}, with the corresponding  loss function for GMS: 
 
\begin{equation}\label{eq:LASSO}
 \mathbb{E}_{{\mathbf w}, \lambda} \Big[ \frac{1}{n}\sum_{i=1}^n w_i \{y_i - X_i^\T G(\bw,\lambda)\}^2 + \lambda u(G(\bw,\lambda))  \Big], 
\end{equation}
with $u(x)=\norm{x}^2_{2}$
for the ridge regression and $u(x)=\norm{x}_{1}$ for the LASSO. 
After obtaining the trained $\widehat G$ from \eqref{eq:LASSO}, for a given input $\bw^*$ and $\lambda^*$, its output $\widehat G({\mathbf w}^*, \lambda^*)$ approximates the minimizer of $\sum_{i=1}^n w_i^*\ell(\theta; y_i)/n + \lambda^* u(\theta)$ with respect to $\theta$.
We simulated from a linear regression model with $n = 500$, $p = 50$, the true parameter $\theta_0=\{1,-2,1,0,\dots,0\}$, and $\sigma^2_0=1$. Each covariate vector $X_i$ follows iid $N(0,\Sigma)$ with $\Sigma_{kl}=1$ for $k=l$ and $\Sigma_{kl}=1/2$ for $k\neq l$. 

\begin{figure}[t]
    \centering
        \begin{subfigure}[b]{0.4\textwidth}
        \includegraphics[width=\textwidth]{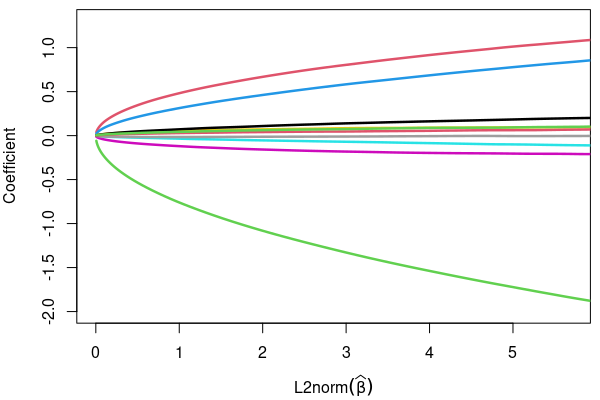}
    \end{subfigure}
        \begin{subfigure}[b]{0.4\textwidth}
        \includegraphics[width=\textwidth]{path_ridge_GMS}
    \end{subfigure}
    
\vspace{0.5cm}
    
    \begin{subfigure}[b]{0.4\textwidth}
        \includegraphics[width=\textwidth]{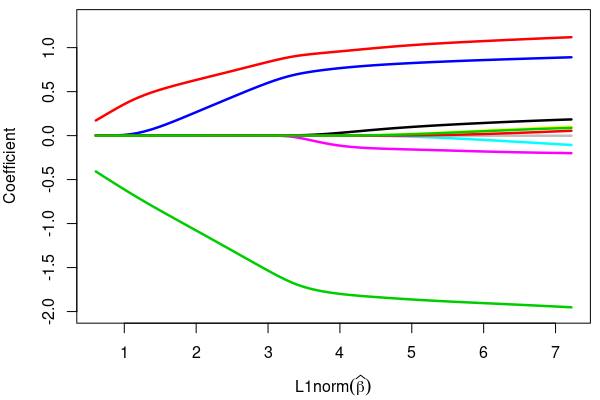}
    \end{subfigure}
    \begin{subfigure}[b]{0.4\textwidth}
        \includegraphics[width=\textwidth]{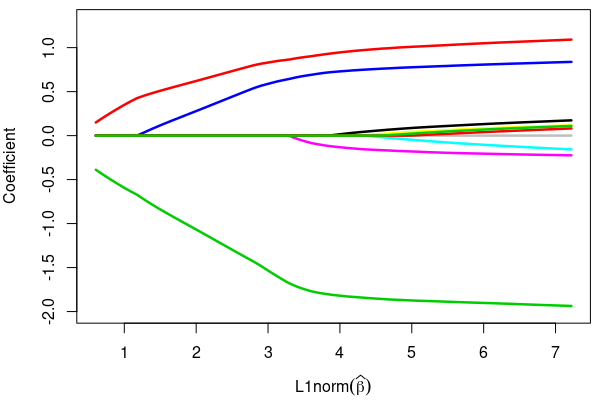}
    \end{subfigure}
    \caption{\small Solution paths of the GMS ridge (top left), the standard ridge regression (top right) and the GMS LASSO (bottom left), and  the LARS (bottom right).}\label{fig:path}
\end{figure}

Figure \ref{fig:path} shows solution-path plots that depict the relations between the tuning parameter choices and the corresponding estimated ridge and LASSO estimators. The $x$-axis indicates the $l_2$ norm of the ridge regression or $l_1$ norm of the LASSO estimators based on a series of $\lambda$'s, and the $y$-axis, the  value  of the  estimated coefficient. 
After the generator is trained by minimizing \eqref{eq:LASSO},  ridge (top left) and LASSO (bottom left) coefficient values are simply $\widehat G(\mathbbm{1},\lambda)$, which generates the curves in Figure \ref{fig:path}   by letting $\lambda$ vary from $0.0006$ to $0.6$. 
The resulting solution-paths of the GMS ridge and LASSO procedures show that the proposed method  approximates the standard ones obtained by LARS \citep{efron2004least} very accurately.

We further investigate how  the GMS-bootstrap helps to quantify uncertainty in choosing $\lambda$. 
 Figure \ref{fig:cv} illustrates some benefits of the bootstrapped CV procedure for the LASSO example.
 The left panel shows a 95\% confidence band for the CV errors across $\lambda$. As \cite{efron1997improvements} noted, the bootstrapped CV improves the performance of prediction error estimation. However, due to heavy computational burden in the standard bootstrap algorithm,  applications of the bootstrapped CV have been greatly hindered.
 The example in Figure \ref{fig:cv} shows that the GMS helps overcome this computational difficulty.
The center panel depicts the bootstrap distribution of the minimizer $\lambda$ of the CV errors (the red line is the estimated density function). 
 If 
 the CV error curve is of main interest, one can easily generate it by the GMS   using binary weights (corresponding to the chosen and left-out folds) as the input.
  In the right panel of  Figure \ref{fig:cv},  the CV error curve obtained by the standard CV computation  is nearly identical to that by the GMS.

\begin{figure}[t]
    \centering
    \begin{subfigure}[b]{0.32\textwidth}
        \includegraphics[width=\textwidth]{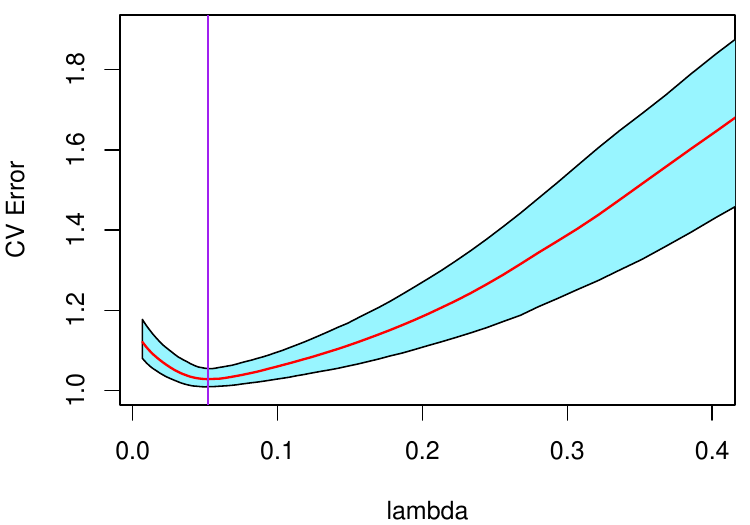}
        \label{fig:cv_err}
    \end{subfigure}
    \begin{subfigure}[b]{0.32\textwidth}
        \includegraphics[width=\textwidth]{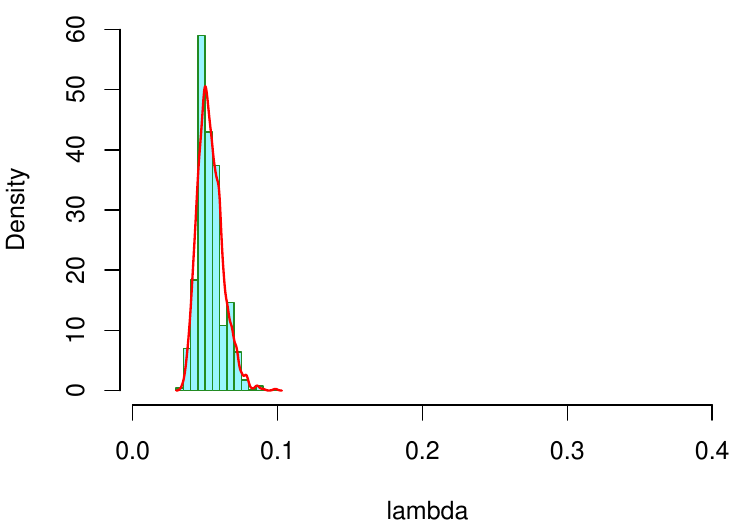}
        \label{fig:density_lam}
    \end{subfigure}
    \begin{subfigure}[b]{0.32\textwidth}
        \includegraphics[width=\textwidth]{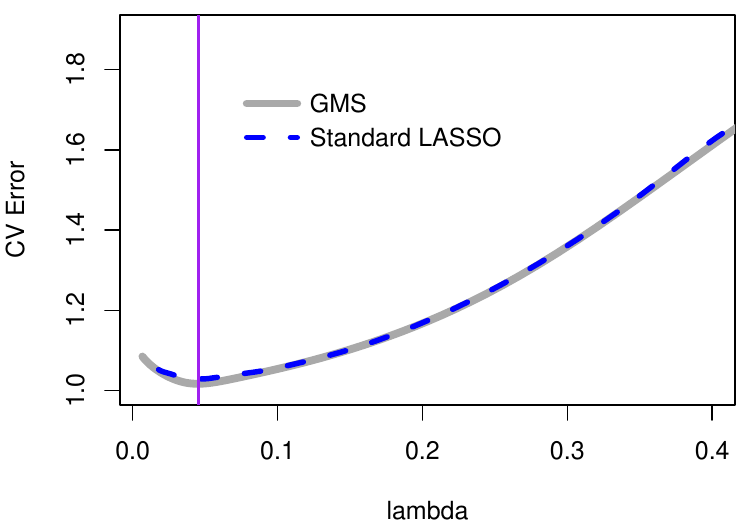}
        \label{fig:cv_err_only}
    \end{subfigure}
    \vspace{-0.3cm}
    \caption{\small Left: The 95\% confidence band of CV error evaluated from the GMS bootstrap with random weights, and the red solid line indicates the mean curve. Middle: The GMS bootstrapped distribution of the CV-error minimizer $\lambda_{\text{min}}$. Right:  CV errors based on the standard LASSO and the GMS with the constant weight vector $\mathbbm{1}$. The purple vertical line indicates the value of $\lambda$ that minimizes the CV error.
    }\label{fig:cv}
\end{figure}

\section{Quantile Regression Inference at Various Quantile Levels} \label{sec:quantilereg}
 Quantile regression models, which assume that a certain quantile of the response variable linearly depends on the covariates, have been commonly used for robust regression analysis
 \citep{yu2003quantile, yu2001bayesian,koenker2004quantile}.
 More precisely, for a given $\eta\in(0,1)$, the conditional $\eta$-th quantile of the response given $X_i$ is modeled by $X_i^\T\theta$.
The standard loss function for fitting such a model is 
%
 \begin{equation}\label{eq:QRLoss}
 \ell(\theta; y_i, X_i) = \rho_\eta(y_i - X_i^\T\theta),
 \end{equation}
 where $\rho_\eta(u) = (\eta- I(u<0))u$.   
The inference for the regression coefficients in this setting is more challenging than that for parametric regression models, because the sampling distribution of the coefficient estimates often relies on the regression error density function, which needs to be estimated and is a challenging task by itself in high-dimensional settings \citep{koenker1994confidence}. 
In routine applications of quantile regression analyses, bootstrap procedures are popular to use for  approximating the  sampling distribution of the estimates \citep{feng2011wild,hahn1995bootstrapping,kocherginsky2005practical}, which can be computationally demanding.
Furthermore, when a practitioner is interested in investigating multiple quantile levels, it is also necessary to  repeat the bootstrap procedure multiple times, each  at a different quantile level. Such a computational burden is prohibitive when the data size is large.     

\begin{figure}[t]
\vspace{-0.5cm}
\centering
        \includegraphics[width=0.6\textwidth]{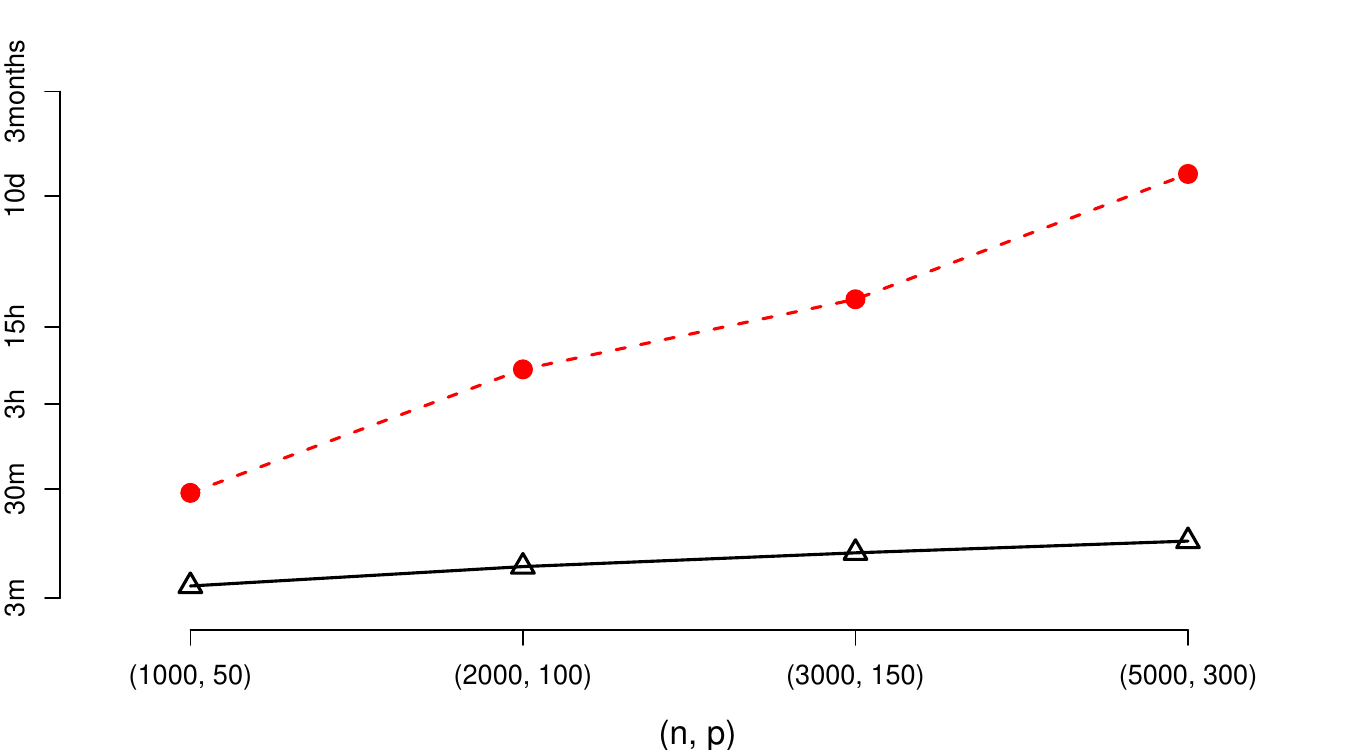}
        \caption{Computation time for the GMS quantile regression (black solid line with triangle marks) and its conventional counterpart (red dashed line with filled-dot marks).}
\end{figure}

By using $\ell(G(\bw, \eta); y_i, X_i) =\rho_\eta(y_i-X_i^\top G(\bw,\eta))$ in \eqref{eq:GMS}, we apply the GMS to  overcome the computational challenges for the  inference of quantile regression models with a GMS loss of 

\begin{equation}\label{eq:QR_GMS}
     \widehat G = \argmin_{G}\mathbb{E_{\mathbf{w},\eta}}\Big[\sum_i^n w_i \rho_\eta(y_i-X_i^\top G(\bw,\eta))\Big],
 \end{equation}
 where $\mathbb{E}_{\mathbf{w},\eta}$ is the expectation operator on $\bw$ and $\eta$, assuming that $\eta$ follows some distribution $\mathbb{P}_\eta$ whose support is (0,1) and independent with $\bw$. A default choice is to add random noises to the candidate set of quantile levels, and let $\bw$ follow the probability law in \eqref{eq:sub_assign}.


 
 To demonstrate the effectiveness of this procedure, we test the method on a simulation setting examined in \cite{feng2011wild}. The data set is generated from the model $y_i = X_i^{\top}\theta_0 + 3^{-1/2}[2+\{1+(x_{1i}-8)^2 +x_{2i}\}/10]\epsilon_i$, $i=1,\ldots,n$,  where  $X_i=(x_{i1},\ldots,x_{ip})^\top$, $n=500$, $p=5$, $\theta_0=\mathbbm{1}_5^\top$, and $\epsilon_i\sim t_3$. We let $x_{2i}=1$ for $i\leq 400$ and $=0$  for $i>400$, and generate  the other covariates independently from the standard log-normal distribution. 
Figure \ref{fig:QR} (b)--(d) compare the $90\%$ confidence  bands of several coefficients generated by the GMS with those obtained by the standard bootstrap over quantiles varying from $0.05$ and $0.95$, showing that the the approaches result in nearly identical bands. 


\FloatBarrier
 \begin{figure}[htp]
    \centering

    \begin{subfigure}[b]{0.7\textwidth}
        \includegraphics[width=\textwidth]{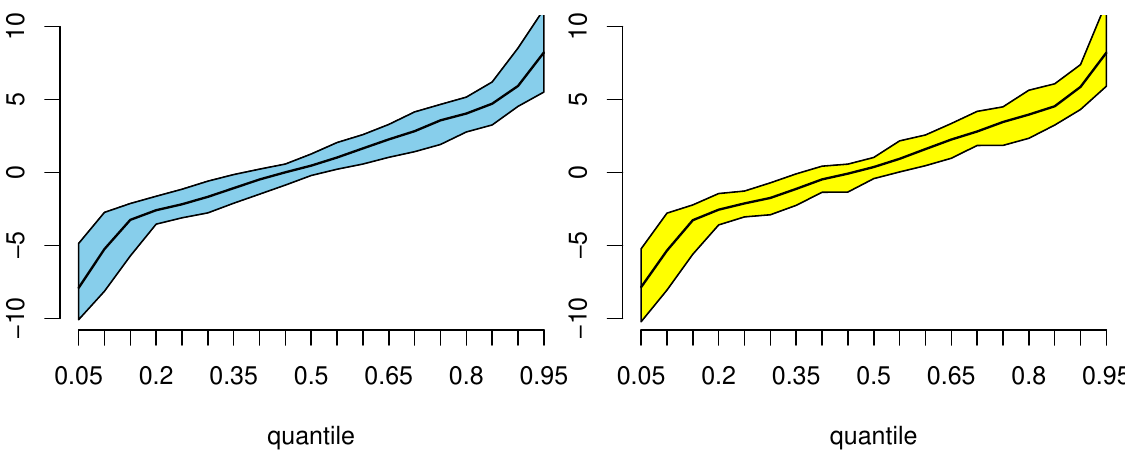}
        \subcaption{Intercept}
    \end{subfigure}
    \begin{subfigure}[b]{0.7\textwidth}
        \includegraphics[width=\textwidth]{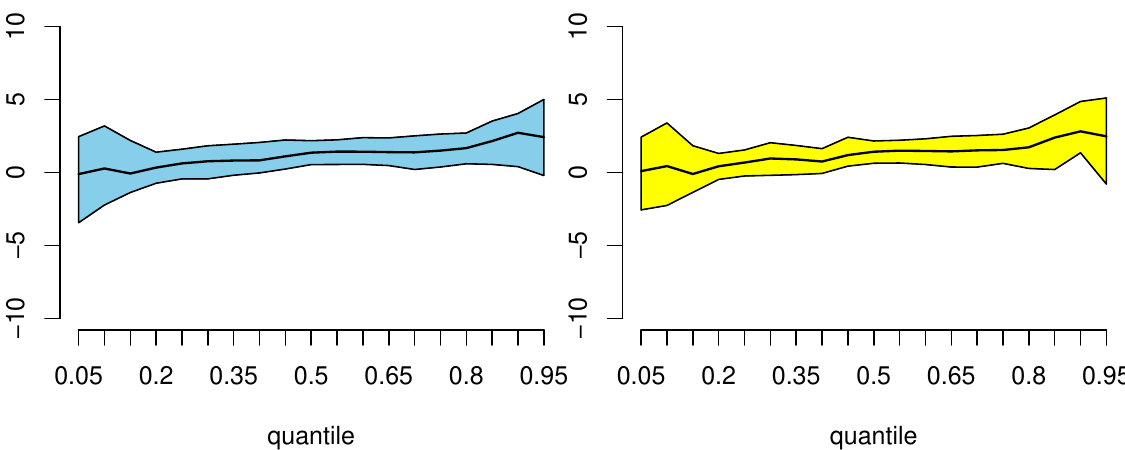}
        \subcaption{$\theta_3$}
    \end{subfigure}
    \begin{subfigure}[b]{0.7\textwidth}
        \includegraphics[width=\textwidth]{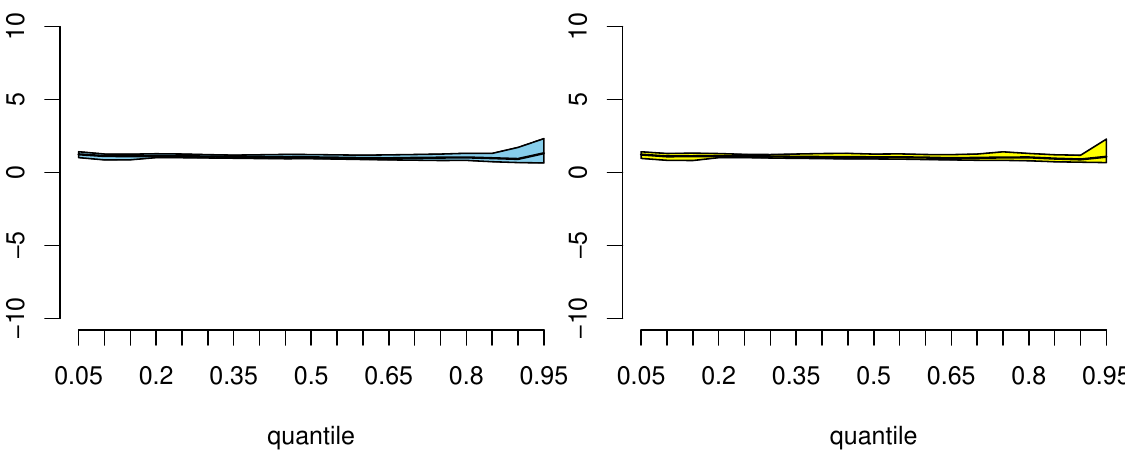}
        \subcaption{$\theta_5$}
    \end{subfigure}
    
    \caption{\small (a):  The red dots and  black triangles depict the computation times required by the standard bootstrap and the GMS, respectively, in generating 5,000 bootstrap estimators. (b)--(d): Comparisons between the 90\% confidence bands obtained from the GMS (blue) and the classical bootstrap (yellow) across quantile levels ranging from 5\% to 95\%. The \texttt{quantreg} R package is used for the conventional. }\label{fig:QR}
\end{figure}

To investigate  computational efficiency of the GMS for quantile regression, we increase the sample size and the number of predictors in the above simulation model to  $(n,p)=(1000,50)$, $(2000,100)$, $(3000,150)$, and $(5000,300)$, respectively, and consider quantile levels varying from $0.05$ to $0.95$ with a skip of $0.05$ (total $19$ quantile levels). We set the first five coefficients of $\theta_0$ to be one and the others be zero. 
Our target is to obtain $5,000$ bootstrap samples under each setting. Due to heavy computational burden of the standard bootstrap procedure, we compute only five bootstrap evaluations and report an estimated time from them (e.g., multiplying $1,000$ to the time taken for the five evaluations).   Figure \ref{fig:QR} (a) depicts the computation time required for each procedure. While the GMS can be trained in less than $10$ minutes for moderately large data size ($n=5000,p=300$), the standard bootstrap requires more than $30$ minutes for  the smallest data set ($n=1000,p=50$) and about $3$ months for the case of $(n=5000, p=300)$.
\section{Computational Strategies for Training the Generator}\label{sec:computation}
\subsection{Multilayer perceptron}\label{sec:NN}
 \emph{Neural networks} have been shown  effective for approximating  functions with  complicated structures.  
Recently, researchers have experimented with various novel ways of using neural networks, such as constructing   generators of real-life-like images and creating generative adversarial networks for approximating high-dimensional distributions 
\citep{ledig2017photo, wang2018high, karras2018progressive, goodfellow2014generative, arjovsky2017wasserstein}.    
The simplest neural network  structure is 
a class of MLPs/FNNs constructed by composing activated linear transformations. For $k=1,\dots,K$, let $g_k$ denote the feed-forward mapping represented by $N^{(k)}$ hidden nodes,  where $g_k: \mathbb{R}^{N^{(k)}}\mapsto \mathbb{R}^{N^{(k+1)}}$ is defined as $g_k(\bX)= \sigma(\bU^{(k)}\bX +\bb^{(k)})\in\mathbb{R}^{N^{(k+1)}}
$,
 where $\bX\in\mathbb{R}^{N^{(k)}}$ is the input variable of $g_k$. Also, this function is characterized by a ``weight'' parameter and a ``bias'' parameter: 
the $N^{(k+1)}\times N^{(k)}$ weight matrix $\bU^{(k)}$ and  the $N^{(k+1)}$-dimensional bias vector  $\bb^{(k)}=\{b_1^{(k)},\dots b_{N^{(k)}}^{(k)}\}$. 
A $K$-layer  MLP function $g:\mathbb{R}^{N^{(1)}}\mapsto \mathbb{R}^{D}$ can be defined by the composition of these functions as 
\begin{equation}\label{eq:MLP}
    g(\bX) =  L \circ g_{K} \circ \dots \circ g_{1}(\bX),
\end{equation} 
where $L: \mathbb{R}^{N^{(K)}}\mapsto \mathbb{R}^{D}$ is a linear function that maps the final hidden layer $g_K\circ \dots \circ g_1(\bX)$ to the $D$-dimensional output space of $g$. Commonly used activation functions include the sigmoid function, the hyperbolic tangent function,  the {Rectified Linear Unit} (ReLU) \citep{nair2010rectified}, the {Exponential Linear Unit} \citep{clevert2015fast},  the {Gaussian Error Linear Unit}  \citep{hendrycks2016gaussian}, etc.
We here employ neural networks with the ReLU activation function $\sigma(t) = \max\{t,0\}$
to construct  generator $G$ in \eqref{eq:GMS} in a novel way as characterized by the integrative loss  (\ref{eq:GMS}) and the weight multiplicative MLP explained below.

\FloatBarrier
\begin{figure}[t]
    \centering
        \includegraphics[width=0.7\textwidth]{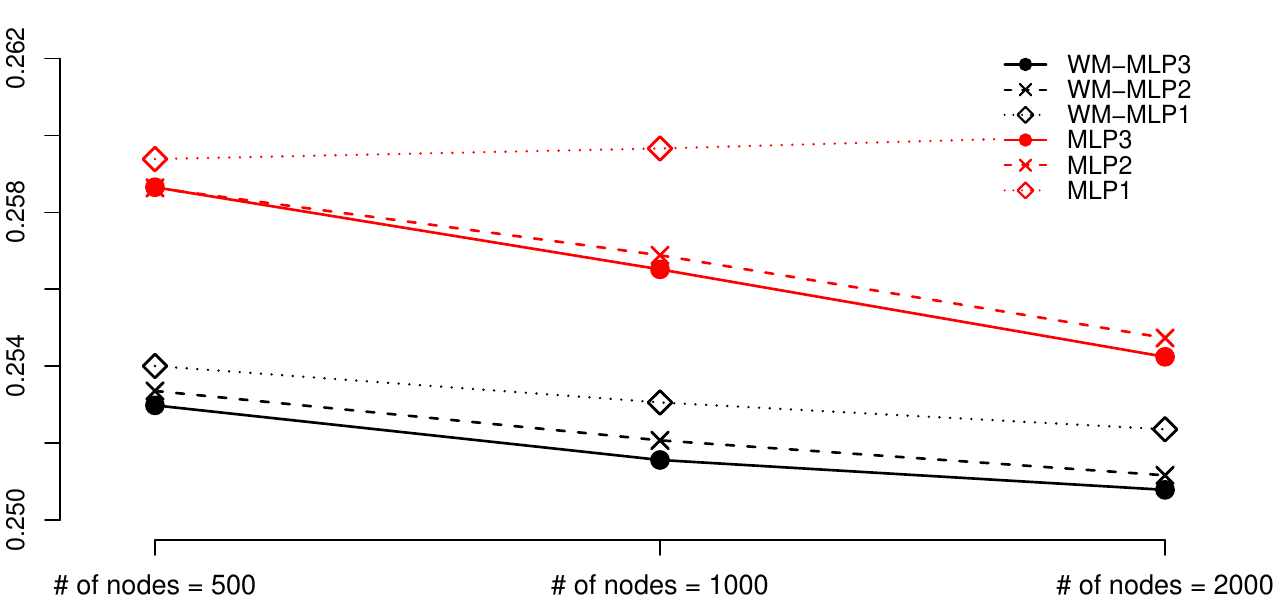}
 \vspace{0.5cm}
\caption{\small Comparison of the losses obtained by the simple MLP and the WM-MLP with various numbers of hidden layers and nodes. The number noted after "MLP" indicates the number of layers $K$. }\label{fig:NNcomparison}
\end{figure}

\subsection{Weight multiplicative MLP}

Despite its generalizability and practicability, we observe that the simple MLP converges slowly for our GMS  applications (as shown in Figure~\ref{fig:NNcomparison}). 
We propose a modification motivated by the Taylor approximation of the first derivative of the weighted loss function. For illustration, let us consider the weighted M-estimation loss $\sum_{i=1}^nw_i\ell(\theta;y_i)$ and its optimizer $\hat\theta_\bw$ in \eqref{eq:loss} for a case of $p=1$ (ignoring $\eta$ and $\lambda$ for simplicity). Under mild conditions, we assume that $\sum_{i=1}^nw_i\ell'(\hat\theta_{\bw};y_i)=0$, where $\ell'$ is the first  derivative of $\ell$ with respect to $\theta$. Then, by using a Taylor approximation of $\ell'$ at a local region of some arbitrary $g(\bw)$, we obtain that 
\[0  =\sum_{i=1}^nw_i\ell'(\hat\theta_{\bw};y_i)\approx \sum_{i=1}^nw_i\ell'(g(\bw);y_i) + \sum_{i=1}^nw_i\ell''(g(\bw);y_i)(\hat\theta_{\bw} - g(\bw)),\] 
where $\ell''(\theta,y)$ denotes the second derivative of $\ell$ with respect to $\theta$.
Thus, we have 
\begin{equation}\label{eq:taylor} 
\hat\theta_\bw \approx g(\bw)-\sum_{i=1}^n \frac{w_i \ell'(g(\bw);y_i)}{\sum_{j=1}^n w_j \ell''(g(\bw);y_j) }    \stackrel{\Delta}{=} g(\bw) + \sum_{i=1}^nw_i h_i(\bw).
\end{equation}
Motivated by this approximation, we propose a new neural network  structure called the \emph{Weight Multiplicative MLP} (WM-MLP) as the sum of a simple MLP and a weight multiplicative one: 
\begin{equation}\label{eq:WMLP}
G(\bw,\lambda,\eta) = \underbrace{L_1\circ B_K(\bw, \lambda, \eta)}_{\substack{\text{Simple MLP: } \\
g(\bw)}
} + \underbrace{L_2\circ(\{f \circ B_K(\bw, \lambda, \eta)\}\odot \bw)}_{\substack{\text{Weight multiplicative network: } \\  \sum_{i=1}^nw_ih_i(\bw,\lambda,\eta)  }},
 \end{equation}
where ``$\odot$'' indicates an element-wise multiplication operator; $L_1:\mathbb{R}^{H}\mapsto \mathbb{R}^{p}$ and $L_2:\mathbb{R}^{n}\mapsto \mathbb{R}^{p}$ are linear functions;
  $B_K: \mathbb{R}^{n+1+1} \mapsto \mathbb{R}^{H}$  and $f:\mathbb{R}^{H} \mapsto \mathbb{R}^{n}$ are simple MLPs with $K$ hidden layers and one hidden layer, respectively. For a  large $n$, 
the subgroup bootstrap in Section \ref{sec:Blockbootstrap} reduces  the dimension of $\bw$ and the network size. 

To demonstrate the improvement, we compare the performances of WM-MLP and the simple MLP for various sizes of hidden nodes $(500, 1000, 2000)$ and layers ($K=1,2,3$), for a logistic regression example.
 The true $\theta$'s in the simulations are equi-spaced  between $-0.5$ and $0.5$ with  $p=100$ and $n=1000$. We train the generator $G$ from ten random initializations and report  the average  loss values after 30,000 iterative updates for each MLP structure. The results are summarized in Figure \ref{fig:NNcomparison},  demonstrating that for all network sizes the proposed WM-MLP outperforms the simple MLP uniformly. In comparison to   a large-sized MLP with three hidden layers and $2000$ neurons, even a small-scale WM-MLP with a single hidden layer and 500 neurons achieves a lower loss, whereas the simple MLP with one hidden layer performs much poorly. 
For all examples in the paper, we used the WM-MLP with three hidden layers as a default, and observed that the resulting generator function based on the WM-MLP performed satisfactorily.     

\begin{algorithm}[t]
\footnotesize
\caption{\footnotesize    A general algorithm to train the GMS. }\label{alg:alg}
\begin{algorithmic}
\STATE $\bullet$ Set $\mathbb{P}_{\balpha,\lambda,\eta}$, $S$ (subgroup size), $M$ (Monte Carlo sample size), and $T$ (total iterations).
\STATE $\bullet$ Randomly split  the full data into $S$ subgroups, resulting in an index function $h(\cdot)$ in \eqref{eq:sub_assign}.
\STATE $\bullet$ Initialize the neural net parameter $\phi^{(0)}$.
\STATE $\bullet$ Set $t=0$.
\WHILE{the stop condition is not satisfied or $t<T$} 
\STATE $\bullet$ Independently sample  $M$ values of $\balpha$'s,  $\lambda$'s, and $\eta$'s from $\mathbb{P}_{\balpha,\lambda,\eta}$.
    \STATE $\bullet$ Consider  $L =  \frac{1}{ M}\sum_{m=1}^M\sum_{i=1}^n\alpha_{h(i)}^{(m)}l(G_{\phi^{(t)}}(\balpha^{(m)},\lambda^{(m)},\eta^{(m)}); y_i)/n+\lambda^{(m)}u(G_{\phi^{(t)}}(\balpha^{(m)},\lambda^{(m)},\eta^{(m)}))$, 
    \STATE where $\balpha^{(m)}$ is the $m$-th sample of $M$ $\balpha$'s.
    \STATE $\bullet$ Update $\phi^{(t+1)}$ by using the gradient of $L$ via a SGD step.
    \STATE $\bullet$ Let $t=t+1$.
\ENDWHILE

\end{algorithmic}
\end{algorithm}

\subsection{Computational strategy in optimization}\label{sec:comp}
It is straightforward to optimize the GMS integrative loss \eqref{eq:GMS} because the expectation can be approximated by a few Monte Carlo samples at each iteration. We  use a variant of the popular SGD algorithms such as \emph{Adam} \citep{kingma2014adam}, \emph{AdaGrad} \citep{duchi2011adaptive}, \emph{RMSProp} \citep{tieleman2012lecture}, etc, to iteratively update the neural net parameters until the algorithm  converges. Algorithm \ref{alg:alg} summarizes the detailed steps of the GMS. As in \eqref{eq:MC_Approx}, this algorithm samples $M$ values of $\bw$'s and $\lambda$'s to approximate the expectation and updates the neural network  parameters via SGD. 
It is not uncommon nowadays for a data set to be extremely large, to the point that the full data size surpasses the memory capacity of the computer in use. Data subsampling would be advantageous in this setting for training the GMS, which partially updates the weights corresponding to the subsampled data in the same spirit as stochastic optimization \citep{allen2019convergence}. \\



\noindent{\bf Technical details of the optimization.} In all our examples, 
we use the WM-MLP with three hidden layers and 1,000 hidden neurons in each layer. In \texttt{Pytorch}, algorithm Adam  is used with a learning rate of $0.0003$  and a decay rate of $t^{-0.3}$ by default. We use full samples in the SGD optimization without mini-batches because the data sizes of the examples we considered  are manageable. However, when the data size is massive, minibatch subsampling would be necessary.   \\

\noindent{\bf Choosing  distributions for $\bw$, $\lambda$, and $\eta$.} 
For bootstrap procedures, the distribution of bootstrap weights $\bw$ (or $\balpha$) can be easily chosen
depending on the practitioner's interest; e.g.,  $\bw \sim \text{Multinomial}(n, \mathbbm{1}_n/n)\ \text{ or } \  \bw \sim n\times\text{Dirichlet}(n,\mathbbm{1}_n)$. When $n$ is excessively large, the dimension of $\bw$ can be reduced by   the subgroup bootstrapping method in Section \ref{sec:Blockbootstrap}. As a general rule, when $n>500$, we recommend considering subgrouping. While our theoretical evidence  suggests that $S \succ n^{1/2}$ is optimal (see Section \ref{sec:theory_subgroup} in Supplementary Materials), empirically setting $S$ to a few hundreds performs well in all situations shown in this paper. By default, $S=100$ was used.  
 Choosing the training distributions for $\lambda$ and $\eta$ is more arbitrary because usually we have no reference distributions for $\lambda$ and $\eta$  unlike the case of $\bw$. We may first set candidate sets for $\lambda$ and $\eta$ in advance (which can be large in size) and then add some random noises to form mixture distributions. For example, we can generate  $\lambda = \exp\{\log \lambda' + \epsilon \} $, where $\lambda'$ is randomly selected from the candidate set and $\epsilon \sim N(0,\delta)$ with $\delta = 0.2^2$ as default. 
 For the quantile regression example in Section \ref{sec:quantilereg}, we generate $\eta=\eta' + N(0,0.03^2)$ with $\eta'$  randomly selected from a  pre-determined candidate set, and then truncated to be 
 in $(0.001,0.999)$.    \\

\noindent{\bf Training stopping criteria.} { 
In order to judge the  convergence  in training the generator function, we first set the maximum number of epochs depending on computational resources at hands (our default is 20,000 epochs). In addition to this stopping criterion, we also  consider an {\it early stopping rule} that has been commonly used in  training general neural networks \citep{heckel2021early,li2020gradient,prechelt1998early} to determine when we stop the optimization algorithm before reaching the maximum number of epochs. Intuitively, we stop the algorithm when the updates do not further reduce the loss value. More specifically, for each epoch $t$, we evaluate the averaged loss value $L^t$ on epoch $t$ and compare it with those of the previous epochs $\{L^{t-\ell}, \ell=1,2,\ldots, k\}$ for some lags. We terminate the SGD algorithm  if $L^{t}$ is within $\epsilon$ of a quantile (such as the median) of the previous losses.
We recommend to monitor the change of loss values in the previous $k$=100 epochs,
and use the 25th percentile with $\epsilon=0.01$.
}

\section{Conclusion}\label{sec:conclusion}
We propose the GMS as a general computational framework to accelerate repeated calculations for (penalized) weighted M-estimations. The GMS was shown effective for  a variety of statistical inference procedures, including bootstrap methods and  cross-validations for general M-estimators. We apply the GMS to   a variety of models, including LASSO,  logistic regression, quantile regression, etc. The GMS performs well in all of the situations we investigated, and the weighted M-estimators generated by the GMS are sufficiently accurate  and comparable to  the much more computationally expensive traditional solutions for all inference purposes. By lowering the computational barrier associated with  repetitious data-splitting or data-sampling processes such as (bootstrapped) CVs and iterated bootstrap, the GMS opens up a new perspective on modern statistics. To date, these approaches have been less noticed and rarely practiced by the statistical community not because they are less valuable, but because their computation cost is prohibitively high. We expect that the GMS will prove to be an effective tool for augmenting the power of statistical models in the era of big data.


\newpage
\begingroup
\setstretch{1.5}
    \bibliography{myReference.bib}
\endgroup

%

\newpage
\appendix
{

\centering{\Large{\bf Supplementary Materials for ``Generative Multi-purpose Sampler for Weighted M-esimtation''}}

}


\section{Theoretical Results}\label{sec:block}

\subsection{Theoretical Justification of Subgroup Bootstrap}\label{sec:theory_subgroup}
While  subgrouping was shown empirically to  approximate the target bootstrap distribution well, its theoretical  consistency is not immediately apparent. 
Here we employ the theoretical tools described in \cite{cheng2010bootstrap}
to examine theoretical aspects of subgrouping bootstrap procedures for the general M-estimation.
Let $Y_1,Y_2,\dots$ be a sequence of  iid random variables (with their observed values $y_1, y_2, \ldots$) from the probability distribution $ \mathbb{P}_0$ with the true parameter $\theta_0$, and the resulting expectation is denoted by $\mathbb{E}_0$. The probability distribution of $\bw$ and its expectation are denoted by $\mathbb{P}_\bw$ and $\mathbb{E}_\bw$,  respectively.  The empirical measure of the observations and the expectation with respect to it are denoted by $\widehat P_n$ and $\widehat{\mathbb{E}}_n$, respectively. We also define a weighted bootstrap empirical measure  $\widehat{ \mathbb{P}}_{\bw,n}=\sum_{i=1}^n w_i\delta_{y_i}/n$, where $\delta_{t}$ is a point measure at $t$, and the expectation with respect to it is denoted by $\widehat{ \mathbb{E}}_{\bw,n}$. 
We let $\ell'$ and $\ell''$ denote the first and second order derivatives of $\ell(\theta;Y)$ with respect to $\theta$, respectively. 
The original M-estimator, which corresponds to the solution of \eqref{eq:loss} with $\bw = {\boldsymbol 1}$, is denoted by $\hat \theta$. The big ``$O_{\mathbb P_0}$'' and small ``$o_{\mathbb P_0}$'' are based on probability distribution ${\mathbb{P}_0}$. We then consider some regularity conditions below: \\

\noindent{\bf (A1)} There exists  $\epsilon>0$ such that
$
\mathbb{E}_0 [\ell'(\theta)-\ell'(\theta_0)] = \mathbb{E}_0 [\ell''(\theta_0)(\theta-\theta_0)] + O(\norm{\theta-\theta_0}^2)
$, 
if $\norm{\theta-\theta_0}<\epsilon$ for large enough $n$.\\
\noindent{\bf (A2)} Suppose that 
$\mathbb{E}_0[\ell'(\theta_0)]=0$, $\widehat{\mathbb{E}}_n[\ell'(\hat\theta)]=o_{\mathbb P_0}(\sqrt{n})$, and $ \widehat{\mathbb{E}}_{\bw,n} [\ell'(\hat\theta_\bw)]=o_{\mathbb P_0}(\sqrt{n})$.
Also, assume that the optimizers $\hat\theta$ and $\hat\theta_\bw$ are unique over $\mathcal{W}$.  \\
\noindent{\bf (A3)} There exists $\delta>0$ such that  
$\widehat{\mathbb{E}}_n[\ell'(\theta)-\ell'(\theta_0)] - \mathbb{E}_0[\ell'(\theta)-\ell'(\theta_0)]  = O_{\mathbb P_0}(\sqrt{n}\norm{\theta-\theta_0})$, if $\norm{\theta-\theta_0}<\delta$.   \\
\noindent{\bf (A4)} The variance of  $\ell'(\theta_0)$ and $\mathbb{E}_0[\ell''(\theta_0)]$ are both non-singular.\\

These or similar mild regularity conditions  are often required for showing asymptotic consistency of M-estimators as in \citep{kosorok2008m, geer2000empirical}. Condition ({\bf A1}) assures that the derivative of the loss is smooth enough to be linearly approximated around  a local region at the true parameter. Condition ({\bf A2}) assumes  the uniqueness of the minimizers, and is general enough to deal with the case that the estimator and its bootstrap version are not exact minimzers, but ``nearly-minimizing'' the target losses. Condition ({\bf A3}) is called the {\it stochastic equi-continuity} \citep{cheng2010bootstrap} and   guarantees that the discrepancy between the empirical and true  derivatives of the loss around the true parameter is of order $\sqrt{n}$. Condition ({\bf A4}) ensures that the considered estimator asymptotically attains a non-singular variance.

Under the same regularity conditions as described in Section \ref{sec:Blockbootstrap},
\cite{cheng2010bootstrap} imposed the following set of additional  conditions on the weight distribution in order to guarantee the bootstrap consistency:\\

\noindent W1. The distribution of the weight vector $\bw=(w_1,\dots,w_n)$ is exchangeable for all $n=1,2,\dots$.  \\
\noindent W2. $\forall \ i$, $w_i\geq 0$, and $\sum_{i=1}^n w_i=n$.   \\
\noindent W3. $\exists \ C<\infty$, such that
$\limsup_{n\to\infty}\norm{w_1}_{2,1}\leq C,$ 
where $\norm{w_1}_{2,1} = \int^{\infty}_0 \sqrt{P_\bw(w_1\geq u)}du$.  \\
\noindent W4. $\lim_{\lambda\to \infty}\limsup_{n\to\infty}\sup_{t\geq \lambda}t^2P_\bw(w_1 > t)=0$.  \\
\noindent W5. $\sum_{i=1}^n(w_i-1)^2/n\overset{p}\to c^2$ with respect to $\mathbb{P}_\bw$ for some constant $c>0$.  \\

\begin{theorem}{\citep{cheng2010bootstrap}.} \label{theo:subgroup}
Assume that ({\bf A1}) -- ({\bf A4}) hold, and the subgroups are randomly assigned. Consider a random  weight bootstrap with $\bw \sim \mathbb{P}_{\bw}$ that satisfy W1--W5. Then, the resulting subgroup bootstrap is consistent; i.e.,
\begin{eqnarray*}
\sup_{x\in\mathbb{R}^p}\big\vert \mathbb P_{\bw\mid D_n}(\sqrt{n}(\hat\theta_\bw - \hat\theta) \leq x)-\mathbb{P}_{0}(\sqrt{n}(\hat\theta - \theta_0) \leq x) \big\vert \to 0,      
\end{eqnarray*}
in $\mathbb{P}_0$-probability as $n$ tends to $\infty$.
\end{theorem}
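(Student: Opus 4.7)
The plan is to reduce Theorem~\ref{theo:subgroup} to Theorem~1 of \cite{cheng2010bootstrap} by verifying that the induced observation-level weights $\bw=(\alpha_{h(1)},\dots,\alpha_{h(n)})$ satisfy conditions W1--W5 at scale $n$ whenever the subgroup-level weights $\balpha$ satisfy W1--W5 at scale $S$. Assumptions (A1)--(A4) concern only the underlying M-estimator and loss, so they are imported unchanged; the work is concentrated entirely in the weight-side verification.

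I would handle the five conditions in order. W1 (exchangeability of $\bw$) follows because the subgroup assignment $h$ is a uniformly random equi-partition of $[n]$ and $\balpha$ is exchangeable on $[S]$, so the joint law of $\bw$ is invariant under every permutation of $[n]$. W2 reduces to $\sum_{i=1}^n w_i=(n/S)\sum_{s=1}^S\alpha_s=n$, which is immediate from $\sum_s\alpha_s=S$, together with $w_i=\alpha_{h(i)}\ge 0$. Since $w_1\stackrel{d}{=}\alpha_1$ by the exchangeability of $\balpha$, the marginal-tail conditions W3 and W4 transfer verbatim from $\balpha$ to $\bw$. W5 then follows from the algebraic identity
\[
\frac{1}{n}\sum_{i=1}^n(w_i-1)^2 \;=\; \frac{1}{S}\sum_{s=1}^S(\alpha_s-1)^2 \;\xrightarrow{\,p\,}\; c^2,
\]
where the convergence uses W5 applied to $\balpha$ together with $S\to\infty$. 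With W1--W5 in hand for $\bw$ at scale $n$, Theorem~1 of \cite{cheng2010bootstrap} applies as a black box and delivers the stated conditional Kolmogorov convergence.

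The main obstacle, and the only part of the argument that is not bookkeeping, is making sure the subgroup structure is compatible with Cheng and Huang's linearization machinery. The weighted score $\sum_{i=1}^n(w_i-1)\ell'(\theta_0;y_i)/\sqrt{n}$ collapses to $\sum_{s=1}^S(\alpha_s-1)U_s/\sqrt{n}$ with $U_s=\sum_{i\in I_s}\ell'(\theta_0;y_i)$, so effectively one is applying a multiplier CLT across $S$ block totals rather than across $n$ individual scores. This is precisely the point where the remark following the theorem, that $S\succ\sqrt{n}$ empirically suffices, originates: one needs $S\to\infty$ so that W5 holds in the limit and the CLT across blocks fires, while at the same time needing $n/S\to\infty$ fast enough that each $U_s$ has concentrated near its Gaussian limit before the across-block CLT is invoked. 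Balancing these two scales is implicit in invoking Cheng and Huang, and for the bare consistency statement in Theorem~\ref{theo:subgroup} no additional work beyond $S\to\infty$ is required.
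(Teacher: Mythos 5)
Your verification of W2--W5 matches the paper's: W2 is immediate from $\sum_s\alpha_s=S$, W3--W4 follow from the Beta$(1,S-1)$ marginal of the Dirichlet weights, and W5 from $\mathbb{E}\{(w_1-1)^2\}=1-2/(S+1)$. The gap is in W1. Your one-line argument --- that $\bw$ is exchangeable because $h$ is a uniformly random equi-partition --- only covers the case where the partition is re-drawn for every bootstrap replicate. In the actual GBS procedure (and in the paper's proof) the partition $\{I_1,\dots,I_S\}$ is drawn once and then \emph{fixed}; the bootstrap randomness, i.e.\ the conditional law $\mathbb{P}_{\bw\mid D_n}$, is only over $\balpha$. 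Conditionally on the data and the fixed partition, weights of observations in the same block are perfectly dependent while weights across blocks are not, so $\bw$ is not exchangeable and Cheng--Huang's Theorem~1 cannot be invoked as a black box. The paper explicitly flags this ("the members of each subgroup is fixed in advance, which breaks the exchangeability condition") and the entire substantive content of its argument is the workaround: it introduces the fully re-randomized weights $\widetilde\bw_S$ (which are exchangeable, hence consistent by your route), and then shows that the conditional variance of $n^{-1/2}\sum_i(w_{S,i}-1)f(y_i)$ under the fixed partition differs from that under the re-randomized partition by terms whose variance is $O(S^{-1})+O(n/S^2)$, hence negligible precisely when $S\succ n^{1/2}$.

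This also means your account of where $S\succ\sqrt{n}$ comes from is not the paper's. You attribute it to needing $n/S\to\infty$ so that each block total $U_s$ is near its Gaussian limit before a multiplier CLT across blocks fires; in fact $S\succ\sqrt n$ forces the blocks to be \emph{small} ($n/S\prec\sqrt n$), and the condition arises from bounding the covariance contribution $O(n/S^2)$ in the fixed-versus-randomized variance comparison, not from any within-block normal approximation. To repair the proposal you would need to either (i) restrict the claim to the re-randomized-partition scheme, which is not the procedure the GBS implements, or (ii) supply the comparison argument between the fixed and exchangeable weighting schemes, which is the step your proposal currently omits entirely.
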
 
Thus, it is sufficient to show that the subgroup bootstrap satisfies W1--W5. First, condition  W2 is trvially true. Since the marginal distribution of $w_1$ follows $\text{Beta}(1,S-1)$, conditions W3 and W4 are satisfied. For W5, because $\mathbb{E}\{(w_1-1)^2\} = 1- 2/(S+1)$, the subgroup weights satisfy W5.  

Now it is sufficient to show that the exchangeability holds as in W1 to show the consistency of the bootstrap. However, the members of each subgroup is fixed in advance, which  breaks the exchangeability condition among the bootstrap weights. Instead, we show that the subgroup bootstrap with fixed subgroups is consistent with  an exchangeable subgroup bootstrap with a random subgrouping.

As an opponent of the proposed fixed subgrouped weight $\bw_{S}=\{w_{S,1},\dots,w_{S,n}\}$ with deterministic subgroup indexes $\{I_1,\dots,I_S\}$, we first consider a fully randomized subgroup bootstrap weight $\widetilde \bw_{S}=\{\widetilde w_{S,1},\dots,\widetilde w_{S,1}\}$ that assumes the subgroups are also randomly assigned for every bootstrap evaluation. Like the subgroup weights, its weights are also generated from $\{\widetilde \alpha_{S,1},\dots,\widetilde \alpha_{S,S}\}\sim S\times \text{Dirichlet}(S,\mathbbm{1}_S)$. As a result, it is trivial that the distribution of $\widetilde \bw_{S}$ is exchangeable, as well as satisfying W2--W5. Then, it follows that for any bounded and continuous function $f$, 
\begin{eqnarray*} \nonumber
&&Var\left(\frac{1}{\sqrt{n}}\sum_{i=1}^n(w_{S,i}-1)f(y_i)\:\bigg\vert \:\by\right)=\mathbb{E}\left(\left\{\frac{1}{\sqrt{n}}\sum_{i=1}^n(w_{S,i}-\widetilde w_{S,i}+\widetilde w_{S,i}-1)f(y_i)\right\}^2\:\bigg\vert \:\by\right)\\ \nonumber
&=& \mathbb{E}\left(\left\{\frac{1}{\sqrt{n}}\sum_{i=1}^n(w_{S,i}-\widetilde w_{S,i})f(y_i)\right\}^2\:\bigg\vert \:\by\right)\ \ \ \cdot \:\cdot \:\cdot \:\cdot \:\cdot\:\cdot \:\cdot \:\cdot \:\cdot \:\cdot \: \cdot \:\cdot \:\cdot \:\cdot \:\cdot \ \ \ \text{(A1)}\\ \nonumber &+&2\mathbb{E}\left(\left\{\frac{1}{\sqrt{n}}\sum_{i=1}^n(w_{S,i}-\widetilde w_{S,i})f(y_i)\right\}\left\{\frac{1}{\sqrt{n}}\sum_{i=1}^n(\widetilde w_{S,i}-1)f(y_i)\right\}\:\bigg\vert \:\by\right)  \ \ \ \cdot \:\cdot \:\cdot \:\cdot \:\cdot  \ \ \ \text{(A2)}\\ \nonumber
&+&Var\left(\frac{1}{\sqrt{n}}\sum_{i=1}^n(\widetilde w_{S,i} - 1)f(y_i)\:\bigg\vert \:\by\right)\ \ \ \cdot \:\cdot \:\cdot \:\cdot \:\cdot\:\cdot \:\cdot \:\cdot \:\cdot \:\cdot \: \cdot \:\cdot \:\cdot \:\cdot \:\cdot \ \ \ \text{(A3)}. \nonumber
\end{eqnarray*}
Because the fully randomized subgroup bootstrap is consistent, the corresponding variance part (A3) in the above equation should be non-zero. As a result, it is sufficient to show that the other term (A1) + (A2) converges to zero in probability with respect to  the probability measure $\mathbb{P}_\by$. 

After a simple arithmetic, it follows that
\begin{eqnarray*} \nonumber
&&\text{(A1) + (A2)} \\ \nonumber &=&\frac{1}{n}\mathbb{E}\bigg(\bigg\{\sum_{i=1}^nw_{S,i}f(y_i)\bigg\}^2-\bigg\{\sum_{i=1}^n\widetilde w_{S,i}f(y_i)\bigg\}^2\:\Big\vert \:\by\bigg)\\ \nonumber
&=&\frac{1}{n}\mathbb{E}\bigg(\bigg\{\sum_{s=1}^S\alpha_{S,s}\sum_{i\in I_s}f(y_i)\bigg\}^2-\bigg\{\sum_{s=1}^S\widetilde\alpha_{S,s}\sum_{i\in \widetilde I_s}f(y_i)\bigg\}^2\:\Big\vert \:\by\bigg).
\end{eqnarray*}
Because $Var(\alpha_{S,1}) = \frac{S-1}{(S+1)}$ and $Cov(\alpha_{S,1},\alpha_{S,2}) = -\frac{1}{(S+1)}$, the above equation follows that 
\begin{eqnarray*} \nonumber
&&\text{(A1)$+$(A2)}\\ \nonumber
&=&\frac{S-1}{n(S+1)} \sum_{s=1}^S \Bigg[  \bigg\{  \sum_{i\in I_s}f(y_i)\bigg\}^2 - \mathbb{E}_{\widetilde I}\bigg(\bigg\{  \sum_{i\in \widetilde I_s}f(y_i)\bigg\}^2\:\Big\vert \:\by \bigg) \Bigg]  \ \ \ \cdot \:\cdot \:\cdot \:\cdot \:\cdot  \ \ \ \text{(B1)}  \\ \nonumber
&+&\frac{1}{n(S+1)} \sum_{s\neq k} \Bigg[  \bigg\{  \sum_{i\in I_s}f(y_i)\bigg\}\bigg\{  \sum_{i\in I_k}f(y_i)\bigg\} - \mathbb{E}_{\widetilde I}\bigg(\bigg\{  \sum_{i\in \widetilde I_s}f(y_i)\bigg\}\bigg\{  \sum_{i\in \widetilde I_k}f(y_i)\bigg\}\:\Big\vert \:\by \bigg) \Bigg].  
\end{eqnarray*}
We show that the variance of (B1) converges to zero as $n$ grows. Then, by using similar steps, we can show that the rest part converges to zero as well. Because $\sum_{s=1}^S\sum_{i\in I_s}f(y_i)^2=\sum_{i=1}^nf(y_i)^2$ for any exclusive subgroups $\{I_1,\dots,I_S\}$, it follows that
\begin{eqnarray*}
&&\sum_{s=1}^S\big\{\sum_{i\in I_s}f(y_i)\big\}^2 - \sum_{s=1}^S\mathbb{E}_{\widetilde I}\Big(\big\{  \sum_{i\in \widetilde I_s}f(y_i)\big\}^2\:\big\vert \:\by \Big) \\
&=&\sum_{s=1}^S\sum_{\substack{i,l\in  I_s\\i\neq l} }f(y_i)f(y_l)- \sum_{s=1}^S\mathbb{E}_{\widetilde I}\Big(  \sum_{\substack{i,l\in \widetilde I_s\\i\neq l} } f(y_i)f(y_l)\:\big\vert \:\by \Big).
\end{eqnarray*}
Therefore, it follows that 
\begin{eqnarray*}\nonumber
Var\{\text{(B1)}\} 
&=&Var\bigg[\frac{S-1}{n(S+1)} \sum_{s=1}^S \bigg\{\sum_{\substack{i,l\in  I_s\\i\neq l} }f(y_i)f(y_l)- \mathbb{E}_{\widetilde I}\bigg(  \sum_{\substack{i,l\in \widetilde I_s\\i\neq l} }f(y_i)f(y_l)\:\Big\vert \:\by \bigg)\bigg\}\bigg]\\ \nonumber
&\leq & 2Var\bigg[\frac{S-1}{n(S+1)} \sum_{s=1}^S \sum_{\substack{i,l\in  I_s\\i\neq l} }f(y_i)f(y_l)\bigg]\\ \nonumber
&=&\frac{2(S-1)^2}{n^2(S+1)^2}\sum_{s=1}^SVar\bigg[ \sum_{\substack{i,l\in  I_s\\i\neq l} }f(y_i)f(y_l)\bigg]\\ \nonumber
&=&\frac{2S(S-1)^2}{n^2(S+1)^2}\Bigg[  4(n/S)(n/S-1)Var\Big\{f(y_1)f(y_2)\Big\}\\ \nonumber
&+&8(n/S)(n/S-1)(n/S-2)Cov\Big\{f(y_1)f(y_2),f(y_1)f(y_3)\Big\} \Bigg]
\end{eqnarray*}
Then, the variance term is $O(S^{-1})$ and the covariance term is at a rate of $O(n/S^2)$. Therefore, the variance of (B1) converges to zero when $S\succ n^{1/2}$. \qed  

\section{Details of Double Bootstrap Procedures}\label{sec:details_double}
We first introduce notation here.  Let $F_0$  and $\widehat F_n$ denote the true distribution function and the empirical distribution of the observed data set, respectively. The bootstrapped version, which is resulted from random sampling the observations with replacement, of $\widehat F_n$ is denoted by $\widehat F_n^*$. In the same sense, the distribution of double bootstrapped observations, which is a bootstrapped version of $\widehat F_n^*$, is denoted by $\widehat F_n^{**}$. We denote the single-bootstrap and double bootstrap estimators resulted from $\widehat F_n^*$ and $\widehat F_n^{**}$ by $\hat\theta^*$ and $\hat\theta^{**}$, respectively. We let the expectation operators $\mathbb{E}_0$ and $\widehat{\mathbb{E}}_n$ be with respect to $F_0$ and $\widehat F_n$, respectively.

It is well-known that percentile (or bootstrap-t) CI via a single bootstrap procedure is not calibrated well in a sense that the resulting bootstrapped coverage is not matched to the nominal coverage, and the CI based on these procedures are unnecessarily wide  \citep{efron1994introduction}. For a one-sided CI with $95\%$ nominal coverage, the basic idea of these bootstrap is on the following approximation:
\begin{eqnarray*}
\mathbb{P}(T^* > t^*_{\alpha}\mid \widehat F_n)\approx \mathbb{P}_0(T > t_{\alpha}\mid  F_0) = 1-\alpha,   
\end{eqnarray*}
where {\it i)} $T=\hat\theta-\theta_0$ for the percentile procedure; {\it ii)} $T = (\hat\theta-\theta_0)/s$, where $s$ is  the standard error (when unknown, we set $s=1$), for the studentized procedure, and $t_{\alpha}$ is the  $\alpha$ quantile of the distribution of $T$. Also, $T^*$ and $t^*_\alpha$ are bootstrapped versions of $T$ and $t_\alpha$, respectively. Despite their simple and bootstrap-like intuition, the problem is that the bootstrapped probability $\mathbb{P}(T^* > t^*_{\alpha}\mid \widehat F_n)$ can be significantly deviated from the target coverage $1-\alpha$ in finite samples.  

To relieve this problem, \cite{hall1988bootstrap} considered a double bootstrap to calibrate the coverage error for the percentile procedure. This correction  searches for a valid quantile level $\widehat\alpha$ in a way that the resulting bootstrap coverage probability approximates $1-\alpha$; i.e., $\mathbb{P}(T^*>t^*_{\widehat\alpha}\mid\widehat F_n)\approx1-\alpha$. We can approximate such $\widehat\alpha$ by using a double bootstrap, and a bootstrap version of $\mathbb{P}(T^*>t^*_{\alpha}\mid\widehat F_n )$ can be evaluated via a double-bootstrapped probability $\mathbb{P}(T^{**}>t^{**}_{\alpha}\mid\widehat F_n^* )$, where $T^{**}$ and $t^{**}_\alpha$ are a double-bootstrapped counterpart of $T^*$ and $t^*_\alpha$. The solution $\widehat\alpha$ can be evaluated by a Monte Carlo approximation as the following steps:\\

\noindent 1. Evaluate $T_{b}^* = \hat\theta_b^* - \hat\theta$ and $T_{bc}^{**} = \hat\theta_{bc}^{**} - \hat\theta_b^*$ 
for $b=1,\dots,B$ and $c=1,\dots,C$.

\noindent 2. Construct $u_b^* = \frac{1}{C}\sum_{c=1}^C \mathbf{1}(T_{bc}^{**} > T_{b}^{*})$ for $b=1,\dots,B$.

\noindent 3. Set $\widehat\alpha = u^*_{(B(1-\alpha))}$, where $u^*_{(h)}$ is the $h$-th smallest ordered value of $\{u^*_1,\dots,u^*_B\}$.\\

\noindent Then, the calibrated  CI can be constructed by $(\infty, \hat\theta + t^*_{\widehat\alpha})$ for the percentile procedure. 

The other approach of double bootstraps is to estimate the standard error of $\hat\theta_b^* - \hat\theta$ for the studentized procedure \citep{hall1988theoretical}.  The explicit form of the bootstrap standard error $\hat s_b^*$ of $\hat\theta_b^*$ is frequently unknown, and the second level bootstrap of the $b$-th bootstrap data set can be used to evaluate the standard deviation of $\hat\theta_b^*$; i.e., $\hat s_b^*\approx\sqrt{\sum_{c=1}^C (T_{bc}^{**} - \bar T_b^*)^2 /(C-1)}$, where $\bar T_b^* = \sum_{c=1}^C T_{bc}^{**}/C$.  Then, the resulting one-sided CI with $95\%$ level is $(\infty, \hat\theta +  \tilde t^*_{95\%} \hat s )$, where $\tilde t_{\beta}^*$ is the $\beta$-quantile of $\{(\hat\theta_{b}^* - \hat\theta)/\hat s_b^*\}_{b=1,\dots,B}$, and $\hat s$ is the estimated standard error of $\hat \theta$ from the single bootstrap distribution. 
 By following a similar way, one can construct a two-sided confidence interval by changing lower and upper levels of quantile values. 

\section{Algorithm for CV generator training}

\begin{algorithm}[h!]
\footnotesize
\caption{\footnotesize  A subgroup weight distribution for (bootstrapped) $K$-fold CV.}\label{alg:CV}
\begin{algorithmic}
\STATE \textbf{Presetting:} 
\STATE $\bullet$ Set a candidate set of the tuning parameter $\{\lambda_1,\dots,\lambda_L\}$.
\STATE $\bullet$ Randomly subgroup the data points into $S$ blocks $\{I_1,\dots,I_S\}$ as demonstrated in  Section \ref{sec:Blockbootstrap}, and fix them. For simplicity, assume $\lfloor S/K  \rfloor$ to be an integer. Subgroup $\{I_1,\dots,I_S\}$ into $K$ folds, say $\{I^*_1,\dots,I^*_K\}$, where $I_{k}^*=\{I_{(k-1)S/K+1},\dots, I_{kS/K}\}$ for $k=1,\dots,K$ (each fold contains $S/K$ blocks). \\

\STATE \textbf{Sampling:} 
\STATE $\bullet$ Randomly select one $k'$ from $\{1,\dots,K\}$, and set $w_i=0$ for $i\in I^*_{k'}$, and the other weights are set to be 
\begin{equation*}
    \{w_i\}_{i\not\in I^*_{k'}}\begin{cases}
        =1 \text{ for an only CV,}\\
        \sim (S-S/K)\times\text{Dirichlet}(S-S/K,\mathbbm{1}_{S-S/K}) \\
        \text{ or } \text{Multinomial} (S-S/K,\mathbbm{1}_{S-S/K}/(S-S/K)) \text{ for a bootstrapped CV.}
\end{cases}
\end{equation*}
\end{algorithmic}
\vspace{-0.3cm}
\end{algorithm}

\subsection{Specifications of Computing System}
For the GMS and GBS applications, we used a GPU computing based on $2\times$\texttt{RTX2080ti} with 11GB RAM (a parallel GPU computing was not employed, but only a single GPU was used for each setting). For the conventional procedures, the computations were run under a workstation with a CPU of  
\texttt{Threadripper 2990WX} 64 threads with 128GB RAM.

\end{document}